\newcommand{\subtitle}[1]{%
  \posttitle{%
    \par\end{center}
    \begin{center}\large#1\end{center}
    \vskip0.5em}%
}
\renewcommand\paragraph{\@startsection{paragraph}{4}{\z@}%
            {-2.5ex\@plus -1ex \@minus -.25ex}%
            {1.25ex \@plus .25ex}%
            {\normalfont\normalsize\bfseries}}
\title{Topos: A Secure, Trustless, and Decentralized \\ Interoperability Protocol}
\author[1]{Théo Gauthier}
\author[1]{Sébastien Dan}
\author[1]{Monir Hadji}
\author[2]{\\Antonella Del Pozzo}
\author[2]{Yackolley Amoussou-Guenou}
\affil[1]{Toposware, Inc., Cambridge, Massachusetts, USA}
\affil[2]{Université Paris-Saclay, CEA, List, Palaiseau, France}
\algnewcommand\algorithmicinput{\textbf{operation}}
\algnewcommand\Operation{\item[\algorithmicinput]}
\newtheorem{theorem}{Theorem}
\begin{document}
\maketitle

\begin{abstract}
Topos is an open interoperability protocol designed to reduce as much as possible trust assumptions by replacing them with cryptographic constructions and decentralization while exhibiting massive scalability. The protocol does not make use of a central blockchain, nor uses consensus to ensure consistent delivery of messages across a heterogeneous ecosystem of public and private blockchains, named subnets, but instead relies on a weak causal reliable broadcast implemented by a distributed network which we call \textit{Transmission Control Engine} (TCE). The validity of cross-subnet messages is ensured by the \textit{Universal Certificate Interface} (UCI) and stems from zkSTARK proofs asserting the validity of subnets' state transitions executed by the Topos zkVM. Such proofs of computational integrity are publicly verifiable by any other participants in and out the protocol such as other subnets or audit companies. The interface between the TCE and subnets leverages the ICE-FROST protocol, an innovative threshold signature scheme, whose static public key allows for uniquely identifying subnets after they register in the protocol. The Topos protocol is designed to provide \textit{uniform security} to the ecosystem and to handle any type of subnets (e.g., permissioned, permissionless) in order to fit any business use cases and pave the way for global adoption and a new standard for the Internet base layer.
\end{abstract}

\tableofcontents

\newpage

\section{Introduction}

Blockchain technology is evolving very fast. We are witnessing the development of more and more real-world applications which demonstrates strong interest from both industry and academia. Furthermore, blockchain technology is on its way to challenge the performance of centralized systems with different blockchain projects now reaching a throughput in the thousands of transactions per second, e.g., Solana \cite{Solana}, Avalanche \cite{Avalanche}, Polkadot \cite{Polkadot}, or Algorand \cite{10.1145/3132747.3132757}.

Since its inception, blockchain technology has mainly focused on creating very sparse and standalone networks, all decoupled one another, trying to solve different challenges. Such heterogeneity has forged a future of blockchain leaning towards the coexistence of multiple layer-1 chains over the domination of a single network. The proliferation of application-specific blockchains and smart contract platforms hosting new instances of existing dApps and DeFi protocols will continue to accelerate the general adoption of Web3 technologies. Thus, the need for interoperability continues to grow considerably. It now appears crucial to solve the interoperability challenge as it will improve overall blockchain scalability and pave the way for new business opportunities by composing applications hosted on different blockchain systems.

During the decade that followed the release of Bitcoin \cite{Bitcoin}, there has been a continuous and global effort to bring blockchain technology to an industrial level. This effort has primarily targeted some of the most important blockchain issues: scalability and interoperability.

\textit{Scalability}. Blockchain scalability is closely pegged to two of its upmost metrics: its latency (speed) and its throughput (capacity). Latency represents the time a transaction takes to be inserted in a block and for it to be accepted by the network, while throughput relates to the number of transactions the network is capable of adding on-chain per unit of time. For uncertain reasons, these concepts are often made misleading and thus prevent the community from gauging the true performance of a blockchain system. Transaction finality, probabilistic or deterministic, has to be considered when evaluating blockchain performance. In this context, latency is defined as the time it takes for a transaction to be finalized, while throughput is defined as the number of finalized transactions per unit of time. A later phase in blockchain technology history has seen the generalization of deterministic finality, commonly achieved by means of classical BFT consensus algorithms \cite{https://doi.org/10.48550/arxiv.2001.11965, Kwon2014TendermintC, 10.1145/2976749.2978399, https://doi.org/10.48550/arxiv.1803.05069}. Such algorithms have shown their limitations in terms of scalability for they come with a quadratic message complexity, and as such lead to much higher settlement latency as the number of validators increases.

\textit{Interoperability.} Interoperability lies in the capacity of multiple systems to interface with each other. Exchanging assets and data between blockchains is key to the adoption of the technology and yet has historically been a reality only within an environment of trust, either internally between the exchanging parties or externally via third-party bridge administrators. In addition, specialization, a scalability enabling principle centered around the segregation of a multi-app blockchain into application-specific chains, is a straightforward call for interoperability.

For an industrial age of blockchain to emerge, we add two additional components: \textit{composability} and  \textit{privacy}.

\textit{Composability.} Composability is a design principle that allows different components within a system to be combined to meet any specific use case requirements. Within a single blockchain network like Ethereum \cite{Ethereum}, composability is atomic: Smart contract functions can invoke other contracts synchronously with the insurance that either all contract calls succeed or none does. In a context of cross-chain interoperability, composability is obtained when business logic deployed on different blockchains can interact with each other to create new value.

\textit{Privacy.} Historically, privacy has remained a rarity in the blockchain scene, transactional data being accessible to all network participants by design even though private transaction protocols such as \cite{Monero, Zcash} have allowed for keeping this data hidden while preserving transaction validity. For enterprise use, being interoperable while keeping internal data private is fundamental. Such a model is still lacking today, preventing organizations from switching from Web1-2 technologies to Web3 ones.

\subsection{Our Contributions}

We propose Topos, a generalized interoperability protocol designed for transmitting messages across sovereign blockchains. The Topos ecosystem is composed of a permissionless reliable broadcast primitive \cite{10.5555/1972495} and a scalable set of decoupled public and private blockchains, named subnets. Topos ensures the validity of state transitions without relying on fraud proofs \cite{AlBassam2018FraudAD} or on designated subsets of participants to perform validity checks \cite{Polkadot}.

In the interoperability landscape, trustlessness is defined as the absence of trust in the interoperability protocol itself and relying instead on the security of the underlying blockchains. While this is an improvement over trusted interoperability protocols, this does not permit complete trustlessness and cannot provide a higher level of security than that of the interoperated blockchains. With Topos, we decouple the validity of cross-chain transactions from the security of the underlying blockchains by replacing this coupling with a zkSTARK proof system, providing irrefutable evidence of the validity of messages.

As part of their implementation of the \textit{Universal Certificate Interface} (UCI), subnets integrate the Topos zkVM (Zero-Knowledge Virtual Machine) to host an arbitrary number of applications which can exchange assets and arbitrary data with other subnets by exchanging objects called certificates. Certificates are a central component used by subnets to exchange cross-subnet messages. Cross-subnet message transmission is handled by the \textit{Transmission Control Engine} (TCE), a decentralized network implementing a reliable broadcast primitive. Additionally, the protocol enforces computational integrity of all state transitions among all subnets by means of a zkSTARK proof system. Moreover, we introduce modifications to the FROST signature scheme \cite{frost} to allow the TCE participants to authenticate incoming certificates prior to their verification and delivery. The combination of UCI and TCE provides the ecosystem with \textit{uniform security} and as such, subnets do not need to rely on any trust assumptions but cryptographic assumptions for cross-subnet message passing. Finally, we present the Topos Subnet, a subnet responsible for maintaining registration on the protocol in order to minimize overall protocol complexity, while not being used for state synchronization nor cross-subnet message passing.

The rest of the paper is organized as follows. Related works are discussed in Section \ref{sec:related-work}. In Section \ref{sec:properties}, we introduce the properties of our solution. We present in Section \ref{sec:design} the design considerations of our protocol and its components. Section \ref{sec:use-cases} shows several use cases in which our solution is well-suited. In Section \ref{sec:conclusion}, we summarize the paper with concluding remarks and Section \ref{sec:discussion} is dedicated to additional discussions and future works for our solution.

\subsection{Related Work}\label{sec:related-work}

Over the years, several projects have focused on interoperability and scalability. Here we introduce some of the most contributing projects. \\

\textbf{Cosmos.} \cite{Cosmos} Cosmos is a network of sovereign blockchains called zones. Zones are endowed with Tendermint \cite{Kwon2014TendermintC} and are connected with each other via the IBC protocol. Cosmos can only support interoperability between BFT-based blockchains and to do so employs a simple model for their inter-blockchain communication protocol by centering its design around the use of decentralized relayers and on-chain light-clients to allow connected zones to verify each other's block headers and validate transaction inclusion proofs. The validity of cross-chain transfers is left to interpretation based on the trust zones have in each other since, as opposed to Topos, Cosmos does not use validity proofs of state transitions.\\

\textbf{Polkadot.} \cite{Polkadot} Polkadot is a shard system composed of a central entity called the Relay Chain along with shards called parachains. The purpose of Polkadot is to let parachains free of any security concerns so that developers can focus on the application layer. This is enabled by two factors (a) an abstraction of the internals of the parachain providing a standard verification for all parachains (b) an active validation executed by randomly and frequently sampled Relay Chain actors. These properties reflect the so-called \textit{Shared Security}. The validity of parachain block candidates is ensured by Relay Chain validators, hence prevents parachains from being sovereign networks and from having a private state. On the contrary, Topos allows subnets to have their own consensus, a private state, and trustless interoperability.\\

\textbf{Avalanche.} \cite{Avalanche} Avalanche is a highly scalable blockchain platform that focuses on the deployment of blockchains with three main targets: application-specific blockchains, smart contract platforms, and digital asset platforms. In Avalanche, blockchains are deployed within subnets (which are sets of validators). Validators are validating all chains in their subnet, as well as the three chains composing the Primary Network. The Avalanche platform has no bounds in the number of blockchains which can participate but offers interoperability only between the chains of the same subnet. Topos does not face this limitation: Upon implementing the Topos protocol, subnets are interoperable with the whole ecosystem. Although cross-subnet interoperability is envisioned on Avalanche, no design has been revealed yet.\\

\textbf{Chainlink.} \cite{Chainlink} Chainlink 2.0 is a framework that aims at solving the oracle problem \cite{Oracle} by introducing the Decentralized Oracle Network (DON). Historically, oracle services introduce trust. However, Chainlink tackles this problem by filtering the off-chain data source through a BFT layer. The committee of Oracles that composes the DON sign their reports with a multi-signature scheme. By doing so, Chainlink is increasing decentralization and minimizing trust in oracle services. Furthermore, Chainlink proposes an interoperability feature with its Cross Chain Interoperability Protocol (CCIP). The global consistency of cross-chain communication in CCIP is reduced to the security of their Anti-Fraud Network which is a dedicated DON actively watching for misbehavior across other DONs. In Topos, global consistency of cross-subnet messages is passively obtained via the TCE's reliable broadcast protocol. In addition, validity of cross-subnet messages is ensured cryptographically in Topos whereas Chainlink relies on agreements between oracles.\\

\textbf{LayerZero.} \cite{LayerZero} LayerZero is an interoperability protocol which decouples provision of block headers and transaction proofs to allow for trustless cross-chain communication. Each action is handled by two centralized or decentralized parties, namely an Oracle and a Relayer. The cross-chain message passing is trustless under the assumption of independence between the Oracle and the Relayer. The Oracle transports the block headers while the Relayer submits the transaction inclusion proofs. In LayerZero, cross-chain transactions are not cryptographically guaranteed to be valid as opposed to Topos's proofs of computational integrity which enforce validity of all transactions.\\

\section{Properties}\label{sec:properties}

An interoperability protocol should be trustless, secure, and have strong network effect. The following properties need to be maximized if the vision of an “Internet of blockchains” is to be realized. By design, the Topos protocol fulfills these properties comprehensively.\\

\textbf{Trustless.} Subnets receiving certificates and cross-subnet messages from another subnet should have guarantees as to the validity of these cross-subnet messages. These guarantees should not rely on trust assumptions in centralized entities, decentralized actors, or the interoperated subnets, but on cryptographic assumptions. Leveraging succinct zero-knowledge proofs allows for removing this trust completely from the equation and solely relying on mathematical truth.\\

\textbf{Security.} The protocol must be robust and prevent an adversary from creating conflicting certificates in an attempt to double-spend via cross-subnet messages, as it would cause consistency issues in the system.\\

\textbf{Scalability.} There should be no limit to the number of participants in the ecosystem. The protocol should be able to handle an arbitrarily large number of subnets, as well as to seamlessly scale to millions of TCE participants, by ideally ensuring logarithmic communication complexity per participant. Furthermore, the system must have very high capacity to be able to process a massive amount of cross-subnet messages.\\

\textbf{Privacy.} It should be possible for subnets to keep their internal state hidden from the rest of the ecosystem. Thus, the protocol cannot rely on having the receiving subnet nor any third-parties actively verify cross-subnet messages by accessing the state of the sending subnet. Instead, cross-subnet messages should contain indisputable evidence that these messages are correct. By design, the protocol should be able to handle any type of subnets, i.e., public and private subnets.\\

\textbf{Authentication.} It is important that data exchanged between subnets is authenticated to provide guarantees of integrity. Authentication using threshold signatures should have a public key that remains static for the whole lifespan of subnets to facilitate key management. As such, it should be possible for any actors, in and out of the protocol, to verify authenticity of cross-subnet messages.\\

\textbf{Decentralization.} The protocol should allow for permissionless participation in the TCE and open registration of subnets. Participation should not be handled by a central authority, and processes should be able to join the system at any time. To enable high levels of decentralization, it is also necessary that the entry cost for participation remains low, such that common hardware is enough to fully participate in the system.\\

\section{Design}\label{sec:design}

The Topos protocol is a generalized interoperability protocol which enjoys strong network effect. Once a blockchain has implemented Topos, it becomes interoperable with all the blockchains in the ecosystem, without any overhead. Topos complies with all the properties detailed in the previous section. In this section, we will first describe all the components that compose Topos, then detail the protocol itself.

\subsection{System Overview}

Here, we define all the components that together make the Topos protocol.

\subsubsection{Subnets}

Subnets are sovereign blockchain networks which implement the Topos protocol, devise their own consensus rules, and control their own native asset. New subnets join the ecosystem to be natively interoperable with all existing subnets without making any compromise on their sovereignty and without the need to trust any middleman. Though not a protocol requirement, subnets are expected to implement classical BFT protocols to enforce deterministic finality. This will help the subnet to guarantee that the state submitted to the rest of the ecosystem is finalized, i.e., cannot be reverted, hence avoiding the subnet to be inconsistent between its internal state and its submissions.

The first subnet client will be our Edge DevKit which is Topos's extension of the Polygon Edge framework \cite{Edge}. As any Edge native blockchain, subnets can implement their own consensus protocol. Topos's Edge DevKit additionally adds on top of Edge the necessary components for subnets to be compatible with the UCI. One significant addition is the integration of the Topos zkVM (see \ref{subsubsec:zkVM}) as the core smart contract execution environment for subnets in the Topos ecosystem.

In later iterations of the protocol, other DevKits will be created by the Topos community as extensions of other blockchain frameworks and will allow developers familiar with any tech stack to join the Topos ecosystem.

\subsubsection{zkVM}
\label{subsubsec:zkVM}

Subnets implement the Topos zkVM, a zero-knowledge virtual machine that exposes a Turing complete programming language which allows instructions to be provable with zero-knowledge proofs. dApp developers can use the Topos zkVM programming language to write any type of application that are deployed on any subnets in the form of smart contracts whose executions are provable. Developers can as such leverage the composability offered by the Topos protocol by composing their applications with other zkVM-compatible applications deployed on any subnets in the Topos ecosystem.

The Topos zkVM has been conceived to offer a set of instructions efficiently verifiable with our zkSTARK proof system. This instruction set, while small and simple, remains expressive enough for developers to easily write any kind of application on subnets. We also include into the default instruction set additional operation-specific instructions (e.g., range check, curve point addition, hash evaluation, etc.), to allow programmers to execute common operations directly without the burden of writing them with the original instruction set. The Topos zkVM execution remains extremely fast to verify—maintaining the overall scalability of the system—even when extending the original instruction set with custom complex ones.

\subsubsection{Universal Certificate Interface}
\label{subsubsec:uci}
The Universal Certification Interface (UCI) enfolds the concept of proving and verifying data across different subnets. This notion is key in trustless and secure interoperability: A sending subnet generates data intended for another subnet and the receiving subnet is ensured that the data is valid and authentic without the need for trust in the sending subnet or any third party. The UCI offers an abstraction of the internal structure of a subnet to guarantee these properties of validity and authentication without compromising the sovereignty and privacy of the sending subnet.

In the Topos ecosystem, the UCI exposes the interface that all subnets implement in order to be interoperable with each other, i.e., exchange certified data. This interface describes how certificates are to be constructed and authenticated by subnets.

\paragraph{Certificate}

A certificate is an authenticated object that wraps exchanged data with a proof of validity. The data subnets exchange are cross-subnet messages, i.e., cross-subnet asset transfers and remote arbitrary smart contract calls. Authentication of certificates is done with Topos's ICE-FROST signature scheme and proofs of valid state transition are created using our zkSTARK proof system.\\

The structure of a certificate is described below:
\begin{itemize}
    \item \verb|subnet_id| is the static ICE-FROST public key used as the unique subnet identifier;
    \item \verb|prev_state_hash| is the previous subnet state commitment (from the previous certificate);
    \item \verb|state_hash| is the current subnet state commitment;
    \item \verb|proof| is the zkSTARK proof of validity;
    \item \verb|XS_list| represents the list of included cross-subnet messages;
    \item \verb|proof_XS_list| is the list of inclusion proofs of cross-subnet messages in the proven state transition.
\end{itemize}

By including proofs of valid state transition in certificates, sending subnets prove the validity of all their internal transactions (including cross-subnet messages) executed since their previous certificate. This allows a receiving subnet, i.e., a subnet to which at least one cross-subnet message contained in the certificate is addressed, to verify the validity of a message without having access to the state of the sending subnet nor relying on a third party to verify the complete state transition. We envision that the actors creating these proofs will be subnet validators although the Topos protocol does not impose any requirements.\\

A valid state transition is defined as follows:\\

\textit{Definition (Valid State Transition).} Let $\texttt{STF}: \mathcal{S}_k \times \mathcal{T}_{k+1} \rightarrow \mathcal{S}_{k+1}$ be a state transition function, where $\mathcal{S}_k$ is the $k$-th subnet state committed to in the $k$-th certificate, and $\mathcal{T}_{k+1}$ is a set of transactions which applied to $\mathcal{S}_k$ results in $\mathcal{S}_{k+1}$. We say that a state transition is valid if and only if: $\forall \ tx \in \mathcal{T}_{k+1}$, $tx$ is a transaction correctly executed by the Topos zkVM.\\

The zkSTARK proof included in the certificate verifies that the set of transactions $\mathcal{T}_{k+1}$ between $Cert_k$ and $Cert_{k+1}$ is a valid state transition. While this does not ensure the validity of the subnet state, it guarantees the validity of its state transitions. Thus, if the state initially committed to by the subnet as per its registration and all of its subsequent state transitions are valid, then by transitivity its latest state is valid.\\

The certificate validation is handled by the $\texttt{Valid\_cert}$ predicate, defined in Algorithm \ref{alg:valid_cert}, which calls the zkSTARK $\texttt{Verif}_\mathcal{C}$ predicate (see Equation \ref{eq:verif_c}) on the certificate data to assert the validity of the committed state transition, checks the inclusion proofs of the cross-subnets messages in the proven state transition, and returns {\sf true} if both checks succeed. In this case, the predicate provides the certificate with \textit{intrinsic validity}: The certificate contains all the necessary information to prove its validity and its verification does not depend on an external state—the predicate is stateless and so trivially monotonic.\\

\begin{algorithm}
\caption{Certificate validation predicate}\label{alg:valid_cert}
\begin{algorithmic}

\Function{Valid\_cert}{$Cert$}
    \State \Return $\texttt{Verif}_\mathcal{C}(Cert.\verb|proof|, Cert.\verb|prev_state_hash|, Cert.\verb|state_hash|) \ \wedge $
    \State \hspace{11mm} $\ \texttt{Verify\_incl}(Cert.\verb|proof_XS_list|) $
\EndFunction

\end{algorithmic}
\end{algorithm}

\paragraph{ICE-FROST Signature}

A signature scheme $\mathsf{TS = (KeyGen,Sign,Verify)}$ with key generation, signing, and verification algorithms $\mathsf{KeyGen}$, $\mathsf{Sign}$, and $\mathsf{Verify}$ respectively, and security parameter $\lambda$ is a $(t,n)$  threshold signature scheme if the following conditions hold:
\begin{itemize}
    \item \textit{Correctness.} Any subsets of participants with cardinality at least $t$ can produce a valid signature on message $m$. A valid signature is a signature that will be verified by the $\mathsf{Verify}$ algorithm.
    \item \textit{Unforgeability.} Any polynomial-time adversary who can corrupt up to $t-1$ players and views the protocol output (signature) on $\mathsf{poly}(\lambda)$ input messages of their choice cannot produce the valid signature $\sigma$ for a message $m$ that has not been submitted to the $\mathsf{Sign}$ algorithm before.
\end{itemize}
The Topos protocol employs threshold signatures to authenticate certificates, i.e., allow actors of the ecosystem to verify that a propagated certificate has been created by the correct subnet and has not been tampered with in transit. The Topos ICE-FROST signature \cite{cryptoeprint:2021:1658} is the first to consider static private/public keys for a round-optimized Schnorr-based signature scheme \cite{10.1007/0-387-34805-0_22}. With static public keys, the group's established public and private keys remain the same for the lifetime of the subnets, while the signing shares of each participant are updated over time, as well as the set of group members. This ensures the long-term security of the static keys and facilitates the verification process of the generated threshold signature because a group of signers communicate their public key to the verifier only once during the subnet's lifetime.

Dealerless threshold signature schemes usually need to run a Distributed Key Generation (DKG) \cite{DKG} protocol each and every time the set of participants changes, resulting in a new public key. However, the TCE requires knowing the public key associated with the signature in order for the TCE participants to verify the signatures applied to certificates. A natural approach would be to include the threshold signature public key for the next certificate in the current certificate but such short-lived public keys clearly lead to large overhead and are not suited for the Topos protocol.

Our contribution to the field of threshold signatures makes long-lived static public keys possible. Topos uses ICE-FROST \cite{cryptoeprint:2021:1658} to enforce usage of a single static key for the whole lifespan of the subnet, no matter how many times the set of validators changes. This allows for a lighter and simpler subnet key management.

In order to use a long-lived public key for each subnet, we add a share update property to our scheme. To update the shares for each validator set, participants secretly share the value ``0" and send corresponding shares to other participants. These new shares are added to previous shares to randomize them without changing the value of the shared secret. Randomization of shares guarantees unforgeability of the threshold signature scheme against a static adversary, i.e., an adversary who can corrupt up to $t-1$ participants. A dynamic adversary on the other hand can corrupt different $t-1$ participants in each validator set. Because validators secret shares need to be encrypted when redistributed, we need the additional property of forward secrecy. That is, an attacker that would get access to some validator decryption key would only be able to derive decryption keys between this compromised validator and future ones, but would not be able to decrypt messages encrypted and shared by previous validator sets during the shares redistribution phase, and hence would not gain knowledge of additional secret shares. This key property ensures that even if the adversary corrupts different subsets of participants in consecutive validator sets, they still cannot forge a valid signature.

\paragraph{STARKs}

When a subnet submits a certificate, it commits to a new state and certifies the state transition. The new state is the state committed to in the previous certificate on which the state transition is applied. To convince other subnets that the state transition is valid and consistent with the previous state, the certificate contains a proof of computational integrity, a zkSTARK proof \cite{BenSasson2018ScalableTA}.

A STARK proof guarantees that a computation has been correctly executed and has returned a certain output, and (if needed) without revealing the input. For example, a STARK proof can guarantee that:
\begin{itemize}
    \item The state $\mathcal{S}_{k+1}$ is the state $\mathcal{S}_k$ plus some transactions, without revealing the transactions.
    \item The hash of the state $\mathcal{S}_{k+1}$ is the hash of the state $\mathcal{S}_k$ plus some transactions, without revealing the transactions nor the states.
    \item An Account $a$ on subnet $S_i$ made a valid (holds enough funds and signed) transfer of $x$ tokens to an account $b$ on subnet $S_j$, without revealing the balance of $a$.
\end{itemize}

More formally, a STARK proof is sent by a prover $\mathcal{P}$ to convince a verifier $\mathcal{V}$ that it ran a certain computation $\mathcal{C}$ with some input $\mathcal{I}$ (and possibly obtained some output $\mathcal{O}$). The STARK system is made of a proving algorithm and a verifying algorithm. While $\mathcal{C}$ is known to both $\mathcal{P}$ and $\mathcal{V}$, $\mathcal{I}$ and $\mathcal{O}$ could be partially or fully kept secret by the prover or shared between the prover and the verifier, depending on the statement to be proven.
The entire process takes four steps:

\begin{enumerate}
    \item $\mathcal{P}$ runs $\mathcal{C}$ with input $\mathcal{I}$ and records an execution trace $t$. Broadly speaking, the trace is a 2D-matrix recording the value of all the variables of $\mathcal{C}$ at each execution step. $\mathcal{P}$ also saves the output $\mathcal{O}$ if any.
    \item $\mathcal{P}$ executes the proving algorithm $\texttt{Prov}_\mathcal{C}$ on input $t$ (and $\mathcal{O}$ if any), which returns a STARK proof $\pi$ that $\mathcal{C}$ has been correctly executed with some input, and returned $\mathcal{O}$ as an output (if any).
    \item $\mathcal{P}$ sends $\pi$ (and possibly parts of $\mathcal{I}$, $\mathcal{O}$, or functions of them, depending on the statement) to $\mathcal{V}$.
    \item $\mathcal{V}$ executes $\texttt{Verif}_\mathcal{C}$ on input the proof and potential inputs/outputs received. It returns $\textsf{true}$ if the proof has been computed from a valid execution of $\mathcal{C}$ that returns $\mathcal{O}$ (if any) on input $\mathcal{I}$, and $False$ otherwise.
\end{enumerate}

STARK systems are known to be doubly scalable, with a prover that is running in time $O(t \log^2 t)$, dominated by the Fast-Fourier-Transform interpolation, and a verifier that scales in only poly-logarithmic time with the trace length, i.e., $O(\log^2 t)$. This allows subnets to prove exponentially large computations and hence improve the overall scalability of the Topos protocol. In addition, such systems are post-quantum secure, as only relying on symmetric primitives like hash functions, unlike their SNARK \cite{184425, Groth16, Plonk} counterparts based on asymmetric primitives.

However, one issue with the above process is that $\texttt{Prov}_\mathcal{C}$ and $\texttt{Verif}_\mathcal{C}$ both depend on $\mathcal{C}$. 
In other words, a distinct pair of proving and verifying algorithms is needed for each specific computation.
Not only does it require both participants to potentially store and execute multiple algorithms, but it also forces $\mathcal{P}$ to write a specific proving algorithm for every new computation, for example for a new smart contract, it wants to prove. 
$\mathcal{V}$ would likewise need to make sure to keep its verifying algorithm up to date. 
For these reasons, we adopted a general-purpose approach: Our STARK system can prove arbitrary computations with a single pair of proving and verifying algorithms that do not need to be updated if the program to prove is modified.

More precisely, the computation $\mathcal{C}$ which is proven is the Topos zkVM execution itself. 
$\mathcal{I}$ consists in a state hash and all the operations happening on-chain modifying this state.
$\mathcal{O}$ is the final state being returned by the Topos zkVM after applying the provided input state transition on the input state.
After a proof has been computed, $\mathcal{P}$ sends it to $\mathcal{V}$, along with a hash of the final state $\mathcal{O}$. The hash of the previous state can be retrieved from the latest verified certificate of the subnet $\mathcal{P}$ belongs to. Only providing the state hashes allows sending subnets to keep their state private and improves on scalability by reducing the overhead in data transmitted. The verification function is defined as follows:
\begin{equation}\label{eq:verif_c}
    \texttt{Verif}_\mathcal{C}(\pi_k, Hash(\mathcal{S}_{k-1}), Hash(\mathcal{S}_k)) =
    \begin{cases}
    $\textsf{true}$, & \text{if the proof is valid}\\
    $\textsf{false}$, & \text{otherwise}
    \end{cases}
\end{equation}

The verification function attests to the validity of the state transition claimed by a sending subnet, from a previous state that was committed to (in the form of a hash), to a new committed state. This algorithm can only output $\textsf{true}$ if the prover submitted a valid state transition as part of its (private) input $\mathcal{I}$, i.e., corresponding to valid executions of the Topos zkVM.\\

Since the Topos protocol relies on cryptography for subnets to prove the validity of their state transitions to the rest of the ecosystem, it is crucial to have efficient cryptographic primitives in order to preserve high scalability. STARK proof systems–in particular the ones based on FRI (see \ref{stark:fri}), as is the case for the one used in our protocol–have very light requirements (namely to work on a prime field $\mathbb{F}_p$ with a $2^k$\textit{-th} root of unity for relatively large $k$) whereas other common SNARK constructions are based, among other things, over algebraic groups, which involve complex mathematical operations that can be hard to implement and optimize. In particular, the prime field involved in STARK proof systems can be much smaller than the usual cryptographic size of 256 or 512 bits.
However, blockchains always require some digital signature scheme to assert the authenticity of propagated messages and, to date, digital signatures based on elliptic curves, such as Schnorr signatures, EdDSA or ECDSA are the preferred ones, due to both their speed and resulting size. The underlying curves commonly used in pair with those schemes are all of large cryptographic sizes, and hence prevent from benefiting fully from the mathematical structure of our proving system.
To address this, and to offer subnets the possibility to exploit the whole power of STARKs, we designed a new elliptic curve, Cheetah \cite{Cheetah}, constructed over a sextic extension of a small field with characteristic $p = 2^{64} - 2^{32} + 1$ and tailored for efficiency when proving operations over its group. A detailed security analysis and description of the deterministic process that generated this curve is available at \cite{cryptoeprint:2022:277}. With Cheetah, the Topos zkVM execution $\mathcal{C}$ can be proven while maintaining a small proof system base field, a crucial consideration for the efficiency of the protocol.\\

The STARK system at the core of Topos enables the protocol to be:

\begin{itemize}
    \item Trustless: The soundness property of STARKs ensures that it is computationally infeasible for a malicious prover to create a valid proof for an invalid statement. This means that validity of state transitions solely depends on the soundness of the STARK proof included in certificates.
    \item Private: Instead of providing the whole computation that updated their internal state to the verification function, subnets pass only the known hash of their previous state along with the hash of their new state, thus do not reveal anything about transactional data. The computational integrity ensured by the STARK proof system combined with zero-knowledge guarantees that no additional information about the state of subnets is revealed to verifiers; this grants full privacy to subnets.
    \item Scalable: STARKs can prove the computational integrity of a very large number of transactions while keeping the verification cost extremely small.
\end{itemize}

\subsubsection{Topos Subnet}

The Topos Subnet is a blockchain network whose main purpose is to handle registration of the ecosystem actors, namely the subnets and the TCE participants, to manage TOPOS, the ecosystem's native cryptocurrency, and to allow for governance of the protocol through on-chain voting, such that TOPOS token holders will have the ability to participate in future protocol improvements. Subnets register themselves by sending a special transaction which pays a dedicated fee denominated in TOPOS. Furthermore, the Topos Subnet is leveraged for the Sybil resistance of the TCE, requiring participants to lock a TOPOS amount in order to join the system. Finally, it enables the setup of an incentive mechanism for the TCE participants to be rewarded when following the prescribed protocol.\\

As for the actual implementation, the Topos Subnet is built with the Edge framework \cite{Edge} and implements the IBFT \cite{https://doi.org/10.48550/arxiv.2002.03613} consensus. IBFT ensures deterministic finality, guaranteeing that blocks can never be reverted once finalized unlike protocols where finality is only probabilistic. Through the process of nomination and validation, an unbounded number of TOPOS token holders are economically incentivized to participate and contribute to the security of the system.

\subsubsection{Transmission Control Engine}

As seen in Section \ref{subsubsec:uci}, the UCI ensures that subnets' state transitions are valid (guaranteed by the STARK proof) and that the certificates transporting them are authenticated (guaranteed by the ICE-FROST signature). To allow for trustless cross-subnet communication, subnets additionally rely on the Transmission Control Engine (TCE), a network of nodes that receives certificates submitted by subnets to consistently deliver them, i.e., prevent subnets from  having conflicting certificates successfully processed.

The TCE implements a permissionless probabilistic protocol of causal reliable broadcast based on \cite{Guerraoui2019TheCN}. The protocol does not involve consensus since consensus enforces total ordering on messages while it is sufficient to have causal ordering for our purposes, i.e., certificates from the same subnet do not commute, while two independent certificates from two different subnets commute. Causal ordering is needed to make sure that the protocol processed all dependencies of a specific certificate as shown in Figure \ref{fig:partial_ordering}. This results in a simpler, more efficient and more robust protocol than consensus-based solutions.

A key role of the TCE is to deal with the situation where a subnet is under attack or is controlled by an adversary, and tries to double-spend. A subnet $S_m$ controlled by an adversary may send the same assets twice to different subnets $S_i$ and $S_j$, i.e., $S_m$ sends two conflicting $n$-th certificates ($Cert_{n}$ to $S_i$ and $Cert_{n'}$ to $S_j$). In other words, and as shown in Figure \ref{fig:conflicting_cert}, for two certificates $Cert_{n}$ and $Cert_{n'}$, they are said to be conflicting if both $Cert_{n}$ and $Cert_{n'}$ are valid with respect to $Cert_{n-1}$ but the operations associated with the two certificates do not have a legal sequential history. Without a mechanism to prevent conflicting certificates, $S_i$ and $S_j$ would execute messages on-chain from $Cert_{n}$ and $Cert_{n'}$ respectively, in which case $S_m$ would successfully be able to double-spend.\\

\begin{figure}
\centering
\includegraphics[width=0.8\textwidth]{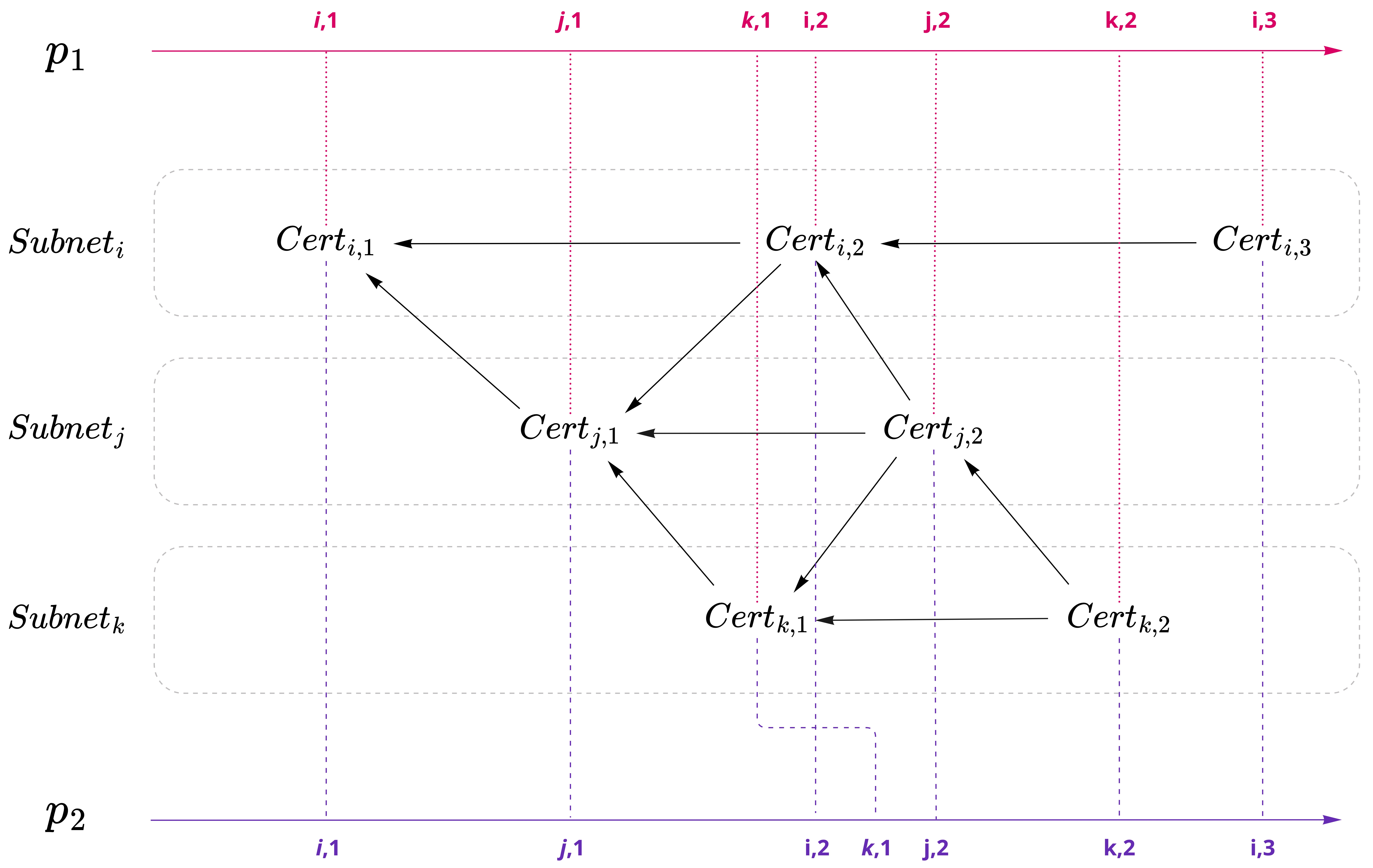}
\caption{\label{fig:partial_ordering}TCE partial ordering of cross-subnet messages. TCE processes $p_1$ and $p_2$ maintain the same set of certificates but ordered differently.}
\end{figure}

\begin{figure}
\centering
\includegraphics[width=0.8\textwidth]{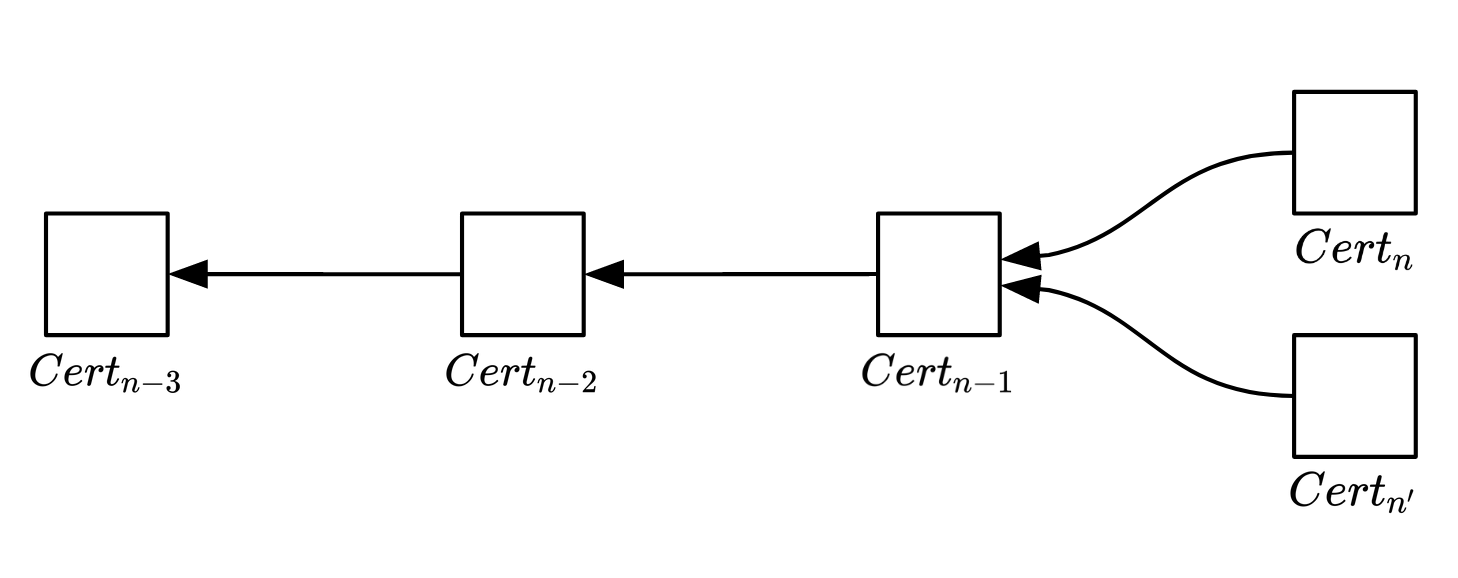}
\caption{\label{fig:conflicting_cert}$Cert_n$ and $Cert_{n'}$ are both valid with respect to the $\texttt{Valid\_cert}$ function but conflict with each other.}
\end{figure}

\paragraph{System Model}

Here we briefly describe the system model for the TCE reliable broadcast. We consider a set $\Pi$ of $n$ processes. Processes are equipped with private and public keys and identified with the latter.
At most a fraction $f$ of all processes is Byzantine, i.e., subject to arbitrary failures. We say that a process is correct if it follows the prescribed protocol. Byzantine processes cannot determine which correct processes another correct process is communicating with. Byzantine processes are under the control of the same adversary, and can take coordinated actions.

Processes can communicate with each other using the Probabilistic Reliable Broadcast (PRB) primitive  \cite{AT2,Guer}. An instance of probabilistic reliable broadcast exports two events:
\begin{itemize}
    \item $prb$.Broadcast($m$): Used by a process inside the system to broadcast a message $m$;
    \item $prb$.Deliver($p,m$): Used by a process inside the system to handle the delivery of a message $m$ from sender $p$.
\end{itemize}
For any $\epsilon \in [0, 1]$, we say that the protocol implementing the reliable broadcast is $\epsilon$-secure if the following properties hold:
\begin{itemize}
    \item No duplication: No correct process $prb$.Delivers more than one message.
    \item Integrity: If a correct process $prb$.Delivers a message $m$, and the sender $p$ is correct, then $m$ was previously $prb$.Broadcast by $p$.
    \item $\epsilon$-Validity: If the sender $p$ is correct, and $p$ $prb$.Broadcasts a message $m$, then $p$ eventually $prb$.Delivers $m$ with probability at least (1 - $\epsilon$).
    \item $\epsilon$-Consistency: Every correct process that $prb$.Delivers a message $prb$.Delivers the same message with probability at least $(1 - \epsilon)$.
    \item $\epsilon$-Totality: If a correct process $prb$.Delivers a message, then every correct process eventually $prb$.Delivers a message with probability at least $(1 - \epsilon)$.
\end{itemize}

\paragraph{Probabilistic Reliable Broadcast}\label{probaRB}

For completeness, we recall the probabilistic reliable broadcast solution as presented in \cite{AT2}, to underline its advantages and later detail the modifications we apply to it.

The probabilistic reliable broadcast protocol at the heart of the TCE is asynchronous, permissionless and tolerant to Byzantine (arbitrary) failures. It replaces classical quorums by stochastic samples which do not need to intersect and are of much smaller size compared to quorums. The protocol has $O(\log n)$ message complexity per process—thus an overall message complexity of $O(n \log n)$—and the sample size per process is logarithmic in the size of the system which hints at its massive scalability capabilities. The properties of Byzantine reliable broadcast of the TCE can be violated with probability $\epsilon$, which can be made arbitrarily small.

The algorithm is composed of three sequential phases of communication exchanges, namely, subscriptions, $Echo$ exchanges, and $Ready$ exchanges.
Upon initialization, a correct process randomly samples three sets via a uniform random oracle. The size of each set and the associated threshold are security parameters of the protocol. At the end of the initialization, each correct processes has the following:
\begin{itemize}
    \item An $Echo$ sample $\mathcal{E}$ and its associated threshold $E < \vert \mathcal{E} \vert$
    \item A $Ready$ sample $\mathcal{R}$ and its associated threshold $R < \vert \mathcal{R} \vert$
    \item A $Delivery$ sample $\mathcal{D}$ and its associated threshold $D < \vert \mathcal{D} \vert$
\end{itemize}

During the subscription phase, a process starts exchanging \textit{sample-specific} subscription messages with processes in the $Echo$, $Ready$ and $Delivery$ sets.
Upon receiving \textit{sample-specific} subscription messages from other processes, a process adds the corresponding message senders in new samples in the following manner:
\begin{itemize}
    \item Senders of $Echo$ subscription messages are added to a new $\tilde{\mathcal{E}}$ set.
    \item Senders of $Ready$ subscription messages are added to a new $\tilde{\mathcal{R}}$ set;
\end{itemize}

These sets determine how processes communicate between them. A process interacts with its sets as follows:
\begin{itemize}
    \item It only listens for messages coming from peers in $\mathcal{E}$, $\mathcal{R}$ and $\mathcal{D}$ sets;
    \item It only sends messages to peers in $\tilde{\mathcal{E}}$ and $\tilde{\mathcal{R}}$ sets.
\end{itemize}

We note that each set is such that its size is $\Omega(\log n)$.\\

Now we proceed with describing the algorithm implementing the probabilistic reliable broadcast. Processes can communicate directly with each other or broadcast messages using a probabilistic (not reliable) broadcast primitive, which might deliver conflicting copies of a message. In the following we refer to it as $pb$.Broadcast and $pb$.Delivery. %, which is different from reliable broadcast and reliable delivery. 

Upon the invocation of $prb$.Broadcast for a message $m$, a process $p$ $pb$.Broadcasts it to all the processes in $\Pi$. 

Upon $pb$.Delivering a correctly signed message $m$, a correct process sends an $Echo$ message to every process in its $\tilde{\mathcal{E}}$ subscription set. %Upon collecting $E$ $Echo$ messages for $m$ from its echo sample $\mathcal{E}$, a process delivers $m$.
\\

A correct process sends a $Ready$ message for $m$ (if correctly signed) to all processes in its $\tilde{\mathcal{R}}$ subscription set if it receives either:
\begin{itemize}
    \item at least $E$ $Echo$ messages for $m$ from $\mathcal{E}$; or
    \item at least $R$ $Ready$ messages for $m$ from $\mathcal{R}$.
\end{itemize}
%and the signature of $m$ is correct, then it sends a $Ready$ message .
%Each correct node broadcasts a $Ready$ message to all the nodes in $\tilde{\mathcal{R}}$ if it has received more than  $R \ Ready$ messages from $\mathcal{R}$.
%At this point, the message is yet to be  reliably delivered. In order for a correct process to reliably deliver the message, it needs to 
When a correct process receives more than $D \ Ready$ messages for $m$ from its delivery sample $\mathcal{D}$ for the first time, and $m$ is correctly signed, it $prb$.Delivers $m$.
\\
%Upon delivery, a correct node checks if $m$ is well-formed, and then adds it to its $pending$ variable.
%Before applying the certificate from subnet $S_j$ to its state, a correct TCE node validates the certificate via the certificate validation function.
%Once a certificate passes validation, the TCE node applies the certificate to its local state. Applying a certificate means that the TCE node adds the certificate $Cert$ and its dependencies $deps(S_j)$ to the history of subnet $S_j$.\\
\\
In the context of the Topos protocol, the probabilistic reliable broadcast alone is not enough. Indeed as mentioned above, the TCE is meant to consistently deliver messages while preserving causal order among them. In the next section we provide a definition of \textit{weak causal order} and discuss how to modify the PRB solution to extend it with a weak causal order property. Finally we show how the TCE employs Weak Causal Probabilistic Reliable Broadcast to broadcast certificates across the network, and how TCE nodes update their state accordingly.

\paragraph{Weak Causal Probabilistic Reliable Broadcast}\label{wcprb}

We now introduce the Weak Causal Probabilistic Reliable Broadcast, which extends the previous broadcast primitive with an additional weak property of causal order among the delivered messages. Intuitively, the weak causal order property imposes that if a correct process delivers a message $m$ then $m$ is weakly causally ordered with respect to the previously delivered messages.\\

For completeness, we formally recall the definition of \textit{causal precedence} \cite{modular94}:\\

\textit{Definition (Causal Precedence).} Let a step be the broadcast or the delivery of a message. A given set of steps induces a partial order as follows. Step $a$ casually precedes step $b$, denoted $a \rightarrow b$, if and only if:
\begin{itemize}
    \item the same process executes both $a$ and $b$, in that order; or
    \item $a$ is the broadcast of some message $m$ and $b$ is the delivery of $m$; or
    \item there is a step $c$, such that $a \rightarrow c$ and $c \rightarrow b$.
\end{itemize}

%In our case, we are not interested in keeping the execution order among the messages $wcprb$.Delivered by the same process. 
We now detail which kind of execution order we need to ensure for the TCE communication. In particular, we show why causal precedence is too strong for our purpose. %Let us consider , which depicts an scenario in which we have two processes $p_1$ and $p_2$, such that $p_1$ belong to $Subnet_j$ and $p_2$ belongs to $Subnet_j$.
%In particular, we impose that, given the sequence of certificates submitted by the same subnet, all the correct processes deliver those certificate in the broadcast order, e.g., given $Subnet_i$ that broadcast $Cert_{i,1}$, $Cert_{i,2}$ and $Cert_{i,3}$, we have that bot $p_1$ and $p_2$ deliver those certificates in the broadcast order.
Intuitively, we are interested in keeping the execution order among the messages $wcprb$.Broadcast by the same process on behalf of a subnet. We also impose the same execution order for any pair of $wcprb$.Delivered and $wcprb$.Broadcast messages by the same process. However, it is not necessary for the system to keep the execution order among messages $wcprb$.Delivered between two $wcprb$.Broadcasts from the same subnet, i.e., certificates in $deps$. For instance, as shown in Figure \ref{fig:partial_ordering}, correct processes $p_1$ and $p_2$ can $wcprb$.Deliver $Cert_{k,1}$ and $Cert_{i,2}$ in different orders.\\

To formalize the above rules, we introduce the \textit{weak causal precedence} relation, defined as follows:\\

\textit{Definition (Weak Causal Precedence).} Let a step be the broadcast or the delivery of a message. A given set of steps induces a partial order as follows. Step $a$ weakly casually precedes step $b$, denoted $a \rightarrow_{w} b$, if and only if:
\begin{itemize}
    \item the same process executes both $a$ and $b$, in that order, such that $a$ and $b$ are not a delivery step; or
    \item $a$ is the broadcast of some message $m$ and $b$ is the delivery of $m$; or
    \item there is a step $c$, such that $a \rightarrow_{w} c$ and $c \rightarrow_{w} b$.
\end{itemize}

An instance of Weak Causal Probabilistic Reliable Broadcast (WCPRB) exposes two events:
\begin{itemize}
    \item $wcprb.\text{Broadcast}(m)$: Used by a process inside the system to broadcast a message $m$;
    \item $wcprb.\text{Deliver}(m)$: Used by a process inside the system to deliver a message $m$.
\end{itemize}
% Notice that in the Topos system,  TCE processes broadcast certificates on behalf of subnets. Certificate production and submission is left to the discretion of the sending subnet. For that reason we are not interested in identifying the process but the subnet that originated that certificate; this information is provided inside the certificate itself.

For any $\epsilon \in [0, 1]$ we say that the protocol implementing the WCPRB is $\epsilon$-secure if the following properties hold: No duplication, Integrity, $\epsilon$-Validity, $\epsilon$-Consistency, $\epsilon$-Totality as defined for the probabilistic reliable broadcast, and additionally:
\begin{itemize}
    \item Weak causal order: If the $wcprb$.Broadcast of a message $m$ weakly causally precedes the $wcprb$.Broadcast of a message $m'$, no correct process $wcprb$.Delivers $m'$ unless it has previously delivered $m$.
\end{itemize}

The pseudo-code in Algorithm \ref{alg:vrb} specifies a solution for WCPRB.
This solution employs the probabilistic reliable broadcast primitive as described in the previous section and a $\texttt{Valid}$ predicate specific to the Topos system.

Before introducing the algorithm, we define the $\texttt{Valid}$ predicate. 
In the PRB solution, a message must be properly signed to be delivered. Notice that, in the Topos system, TCE processes broadcast certificates on behalf of subnets. Certificate production and submission is left to the discretion of the sending subnet. For that reason, we are not interested in identifying the process but the subnet that originated this certificate; this information is provided inside the certificate itself by means of the $\verb|subnet_id|$ field. Moreover, each certificate is $wcprb$.Delivered if the casually dependent certificates have also been $wcprb$.Delivered.\\

The message to be delivered must satisfy validity conditions. Specifically, the $\texttt{Valid}$ predicate (see Algorithm \ref{alg:valid}) is the conjunction of the two deterministic predicates $\texttt{Valid\_cert}$ and $\texttt{Valid\_deps}$:
\begin{enumerate}
    \item The certificate validation predicate $\texttt{Valid\_cert}$ (see Algorithm \ref{alg:valid_cert}) must return $\textsf{true}$;
    \item Any preceding certificate that a subnet $S_j$ issued must have been validated (implied by the linkage of certificates and encompassed in $\texttt{Valid\_cert}$);
    \item The reported dependencies of the certificate must have been validated and must exist in the histories of all subnets of the dependencies, i.e., the $\texttt{Valid\_deps}$ predicate must output $\textsf{true}$.
\end{enumerate}

%Algorithm \ref{alg:valid} details the monotonic $\texttt{Valid}$ predicate as the conjunction of two deterministic predicates: $\texttt{Valid\_cert}$ and $\texttt{Valid\_deps}$.\\ %Although not explicit, Algorithms \ref{alg:valid} and \ref{alg:certsubmission} share the same local variables.

The \texttt{Valid\_deps} predicate, as defined in Algorithm \ref{alg:valid_deps}, returns {\sf true} if for each certificate submitted by a subnet $S_k$ in $deps$, the certificate is in $history_{p}(S_k)$. Note that the predicate is monotonic because, for a certificate message $m$, if $\texttt{Valid\_deps}(m.deps)=\textsf{true}$ at time $t$ then $\forall t' \geq t, \texttt{Valid\_deps}(m.deps)=\textsf{true}$. The combination of the $\epsilon$-Consistency, the $\epsilon$-Totality—both providing the agreement property—and weak causal order (enforced by the $\texttt{Valid\_deps}$ predicate) properties of the WCPRB defines \textit{extrinsic validity}: All correct TCE nodes are guaranteed to deliver the same weakly causally ordered sets of certificates, i.e., no subnet can successfully submit conflicting certificates in order to double-spend.\\

\begin{algorithm}
\caption{Certificate dependencies validation predicate at process $p$}\label{alg:valid_deps}
\begin{algorithmic}

\Function{Valid\_deps}{$deps$}
    %\State \Return $( \forall Cert_i \in deps \ \vert \ Cert_i \in history_{p}(S_k = Cert_i.subnet\_id))$
    \State \Return  $\bigwedge\limits_{Cert_i \in deps} (Cert_i \in history_{p}(S_k = Cert_i.subnet\_id))$
\EndFunction

\end{algorithmic}
\end{algorithm}

When participating in the WCPRB, each TCE process $p$ locally holds the following variables:
\begin{itemize}
    \item $history_{p}(S_j)$: The local set of accepted incoming and outgoing certificates involving subnet $S_j$, for all $S_j$ (initialized and modified by Algorithm \ref{alg:certsubmission});
    \item $deps_p$: The local set of incoming certificates involving a subnet since its last outgoing certificate (initialized and modified by Algorithm \ref{alg:certsubmission});
    \item $pending_{p}$: The local set of certificates pending for validation.
\end{itemize}

Note that $deps_p$ and $history_{p}(S_j)$ are shared between Algorithm \ref{alg:valid} and Algorithm \ref{alg:certsubmission}, where only the latter modifies both variables.

\begin{algorithm}
\caption{Certificate message validation at process $p$}\label{alg:valid}
\begin{algorithmic}

\Function{Valid}{$m$}
    \State $(Cert, deps) := m$
    \State \Return $\texttt{Valid\_cert}(Cert) \wedge \texttt{Valid\_deps}(deps)$
\EndFunction

\end{algorithmic}
\end{algorithm}

%We left the $\texttt{Valid}$ definition open but we define it for the Topos system in the next section. We only impose that it is deterministic and either stateless or stateful. In the latter case the state is defined as all the messages delivered by a correct process. Moreover, if it is stateful we impose that for a message $m$, if $\texttt{Valid}(m)=\textsf{true}$ at time $t$ then at any time $t' \ge t, \texttt{Valid}(m)= \textsf{true}$ (this is already the case if $\texttt{Valid}$ is stateless).%$\texttt{Valid}(m)$ is true at time $t$ then $\texttt{Valid}(m)$ is true $\forall t'>t$.

The WCPRB protocol works as follows. When a TCE process wants to $wcprb$.Broadcast a message $m$ it verifies that $\texttt{Valid}(m)$ holds before calling $prb$.Broadcast($m$). Upon the $prb$.Delivery($m$) event, a TCE process $p$ does not trigger the $wcprb$.Delivery of that message yet, but adds it to the $pending_p$ set. Intuitively, $\texttt{Valid}$ being a stateful predicate, it can happen that $m$ does not satisfy the predicate at the current time but it will after the $wcprb$.Delivery of other messages. Hence, new incoming messages are kept in the $pending_p$ variable. %However, to avoid having spurious messages issued by Byzantine processes, we can add to the code a check that each message is well formed before adding it to $pending$. 
As soon as there exists a message $m$ in $pending_p$ such that $\texttt{Valid}(m)$ outputs \textsf{true}, $m$ is removed from $pending_p$ and delivered.  
An interested reader can find a sketch of the proof of correctness of the proposed solution in Appendix \ref{sec:proofs}.

%############### NEW VALID PRB
\begin{algorithm}
\caption{The TCE's Weak Causal Probabilistic Reliable Broadcast at process $p$ }\label{alg:vrb}
\begin{algorithmic}
\State $pending_{p} := \emptyset$
\\
\Event{$wcprb.\text{Broadcast}(m)$}
\If{$\texttt{Valid}(m)$}
\State $prb.\text{Broadcast}(m)$
\EndIf
\EndEvent
\\
%% Ready to deliver
\Event{$\langle prb.\text{Deliver}(m) \rangle$}
\State $pending_{p} := pending_{p} \cup \{m\}$
\EndEvent
\\
\Exists{$m$}{$\langle \{m\} \in pending_p \ \wedge \ \texttt{Valid}(m)\rangle$}
\State $pending_p := pending_p \setminus \{m\}$
\State \textbf{trigger} \ $wcprb.\text{Deliver}(m)$
\EndEvent

\end{algorithmic}
\end{algorithm}

\paragraph{Certificate Submission and State Update}\label{submission}

We detail how to transmit certificates in the Topos ecosystem using WCPRB. Algorithm \ref{alg:certsubmission} describes the submission of a certificate and its application to the local state of TCE processes.
%To do so we define the $\texttt{Valid}$ predicate and how processes locally apply the $wcprb$.Delivered certificates.

Each TCE process $p$ locally holds the following variables:
\begin{itemize}
    \item $history_{p}(S_j)$: as defined for Algorithm \ref{alg:valid};
    %The local set of accepted incoming and outgoing certificates involving subnet $S_j$, for all $S_j$;
    \item $deps_{p}$: as defined for Algorithm \ref{alg:valid};
    %\item $pending$: The local set of certificates pending for validation
    \item $subnet_{p}$: The subnet that $p$ belongs to. If $p$ does not belong to any subnet then $subnet_{p}= \bot$. 
\end{itemize}

 To submit a certificate $Cert$, a subnet $S_j$ $wcprb$.Broadcasts an ICE-FROST signed message $m = (Cert, deps)$. When a correct TCE process $wcprb$.Delivers a certificate, the TCE node applies the certificate to its local state. Applying a certificate means that the TCE node adds the certificate $Cert$ and its dependencies $deps$ to the history of subnet $S_j$. More 
precisely, upon the $wcprb$.Delivery event for a message $m$ from $S_j$, a correct process $p$ always updates $history_{p}(S_j)$ but updates $deps_{p}$ only if $p$ belongs to a subnet addressed by the cross-subnet message. Notice that, if $p$ does not belong to any subnet ($subnet_{p}=\bot$) then $p$ never updates $deps_{p}$.\\

The state of the TCE (see Equation \ref{eq:state_tce}) is defined as the union of all the $history$ sets of the TCE participants. In other words, and as shown in the equation below, it is the set of all certificates that have been validated, $wcprb$.Delivered, and applied. The state of each TCE node is local and converges with probability $1 - \epsilon$.

\begin{equation}\label{eq:state_tce}
    state_{TCE}= \left\{\bigcup_{p \in \Pi} history_{p} \right\}
\end{equation}

%Notice that, given the definition of the stateful $\texttt{Valid}$ predicate (see Algorithm \ref{alg:valid}), the EVPRB delivers certificates while preserving causal ordering, i.e., a certificate $Cert$ in a message $m$ is delivered when $\texttt{Valid}(m)$ holds. This is true when all the previous causally dependent certificates have been delivered as well, otherwise the certificate remains in $pending$ (see Algorithm \ref{alg:vrb}).

%############### NEW VERSION V3
\begin{algorithm}
\caption{The TCE's certificate submission and state update at process $p$}\label{alg:certsubmission}
\begin{algorithmic}
% to comment that each process is assigned to only one SN
% to discuss that the way a sn create and diffuse the cert is up to it.
\State $history(S_j):= \emptyset, \forall S_j \in \mathcal{S}$  \Comment{$\mathcal{S}$: Set of all subnets}
\State $deps_{p} := \emptyset$
\State $subnet_{p} \in \mathcal{S} \cup \{\bot\}$ 
\\
\Event{submit($m$)}
\State $wcprb.\text{Broadcast}(m)$
\State $deps_{p} := \emptyset$
\EndEvent
\\
%% Ready to deliver
\Event{$\langle wcprb.\text{Deliver}(m) \rangle$}
\State $(Cert, deps) := m$
\State $history_{p}(S_j) := history_{p}(S_j) \cup deps \cup Cert$
%\If{($\exists S_k \in \mathcal{S}_{Cert} \ | \ S_k = subnet_{p}$)}
\If{($subnet_{p} \in \mathcal{S}_{Cert}$)}
\Comment{$\mathcal{S}_{Cert}$: Subnets receiving messages for them in $Cert$} 
\State $deps_{p} := deps_{p} \cup Cert$
\EndIf
\EndEvent

\end{algorithmic}
\end{algorithm}

Overall, the TCE adds multiple key properties to the Topos protocol:

\begin{itemize}
    \item Security: The TCE enforces a weak causal ordering of certificates under asynchrony \cite{10.1145/167088.167105} and is a more robust primitive than consensus and atomic broadcast since both of them are impossible to solve in the asynchronous model even with one crash failure \cite{10.1145/3149.214121}.
    \item Scalability and Decentralization: With a per-node communication and computation complexity logarithmic in the size of the system, the WCPRB protocol can sustain a very high number of TCE participants—which increases decentralization—while preserving high throughput.
\end{itemize}

{\bf Discussion.} For clarity of presentation, we used the PRB protocol as a black box. However, the solution defined in Section \ref{probaRB} would not prevent the $prb$.Delivery of ill-formed messages (i.e., messages such that $\texttt{Valid\_cert}(Cert)=\textsf{false}$). Notice that this is not a problem, as those messages are not delivered by the WCPRB solution defined in Algorithm \ref{alg:vrb}. Indeed, if $\texttt{Valid\_cert}(Cert)=\textsf{false}$ then $\texttt{Valid}(m)=\textsf{false}$ (see Algorithm \ref{alg:valid}).

However, to prevent TCE processes from $prb$.Delivering messages that will never be $wcprb$.Delivered, we can move the $\texttt{Valid\_cert}$ check to the reception of new messages during the PRB protocol (see Section \ref{probaRB}). That is, at the reception of a message $m$, each correct process checks that the message is correctly signed and that the certificate carried by $m$ is well-formed ($\texttt{Valid\_cert}(Cert)=\textsf{true}$) before processing it, or discards it otherwise.  

As a consequence, at the WCPRB level (see Algorithm \ref{alg:vrb}) we can apply a lighter $\texttt{Valid}^\prime$ predicate definition, as defined in Algorithm \ref{alg:valid2}.

\begin{algorithm}
\caption{Certificate validation at process $p$, when $\texttt{Valid\_cert}$ check is integrated into the PRB solution. }\label{alg:valid2}
\begin{algorithmic}

\Function{Valid$^\prime$}{$m$}
    \State \Return \texttt{Valid\_deps}($m.deps$)
\EndFunction

\end{algorithmic}
\end{algorithm}

\paragraph{Sybil Resistance}

The TCE establishes an agreement on a set of operations among processes with equal weight. Processes are equal due to the fact that they must construct samples of peers selected uniformly at random. The TCE tolerates up to a threshold $f$ of Byzantine processes in the system. The Sybil attack \cite{10.1007/3-540-45748-8_24}, the capability of an adversary to freely create identities to overcome the $f$ threshold, is the main threat that the system has to tolerate.
Sybil resistance is relatively easy to achieve in a permissioned system contrary to permissionless systems where membership is free.
The TCE is a permissionless system and as such it is crucial to enforce that the number of Byzantine processes remain below the threshold. Notable approaches are the Proof of Work (the adversary cannot have more computational power for free) and Proof of Stake (the adversary cannot hold more assets for free). The Topos approach follows the latter, which implies the management of an asset. So in order to defend itself against Sybil attacks, the TCE leverages the Topos Subnet to ensure that a majority of reliable broadcast participants follows the protocol such that it is not possible to inconsistently deliver cross-subnet messages, i.e., double-spend.
%This is relatively easy to achieve in a permissioned system, but does not work when dealing with permissionless systems where membership is free. The TCE is a permissionless system and as such it is crucial to enforce that the number of Byzantine processes remain below the threshold. The TCE has to take into account Sybil attacks \cite{10.1007/3-540-45748-8_24}, because the free creation of identities is a prominent concern. So in order to defend itself against Sybil attacks, the TCE leverages the Topos Subnet to ensure that a majority of reliable broadcast participants follow the protocol, thus it will not be possible to inconsistently deliver cross-subnet messages, i.e., double-spending.

Processes wishing to participate in the TCE must submit a special transaction on the Topos Subnet which records their intention to join the TCE. This transaction includes a fixed amount, denominated in TOPOS, as well as an identifier of the participant and it locks the TOPOS amount on the Topos Subnet. Non-free registration of participants provides the basis for a Sybil resistance mechanism in the TCE: Participants communicate only with peers registered on the Topos Subnet.

%Although requiring the reliable broadcast participant to stake an amount of TOPOS is sufficient to provide guarantees against Sybil attacks, it is not sufficient on its own to incentivize users to join and to produce work. Thus, a reward mechanism is employed in order for TCE participants to be rewarded for their work in validating cross-subnet messages and delivering certificates. TCE nodes can store subnet-related data locally or externally and issue proofs of data availability in order to receive reward from the system.\\

%[Details are deferred to a subsequent paper?]

\subsection{Protocol Overview}

To this day, interoperability protocols have fallen into disjointed categories depending on their design goals. Trusted interoperability protocols (e.g., \cite{Wormhole, AvalancheBridge, PolygonPoSBridge}) have relied on external verifiers—administrators of centralized protocols or incentivized relayers in decentralized protocols—to bridge the interoperated chains, allowing for cross-chain message passing at the cost of trust in verifying entities that are external to the blockchains implementing the protocol (depending on an auxiliary, often much weaker, cryptoeconomic security). On the other hand, trustless protocols have found solutions to remove the need for trust in third parties: Some protocols (e.g., \cite{LayerZero}) have centered their design on the multiplication of non-colluding verifying networks, while others (e.g., \cite{Cosmos}) have removed the need for external verification by solely depending on the chain's own actors to natively verify data cross the chains. These models, although on the right path towards trustless interoperability, still impose trust in the consensus security of the interoperated chains and as such do not permit true trustlessness.

Topos innovates by introducing the very first solution that cryptographically enforces validity of cross-subnet messages without the need for trust in any external verifiers nor consensus security. At its core, Topos allows subnets to exchange messages with each other trustlessly and safely. \textit{Uniform security} (see Section \ref{sec:security}) is a key innovation in the cross-chain interoperability landscape and will pave the way for a new era of secure communication between decoupled blockchains.\\

%have typically relied on the trust in their administrators, whether the bridge is centralized (single or fixed set of administrators) or decentralized (an open decentralized network of administrators/relayers). These administrators guarantee that the origin blockchain has committed to the cross-chain message, i.e., has included the message transaction on its canonical chain, and transport the cross-chain message by executing the requested transaction on the bridged blockchain. For instance, in the case of a token bridge, bridge administrators watch a deposit smart contract deployed on the origin chain, looking for users locking assets, then sign (possibly collectively) and submit the minting transaction on the receiving chain. Numerous hacks on bridges have clearly demonstrated that this model does not work and is critically unsafe: Bridges are regularly subject to attacks that allow hackers to forge cross-chain transfers thereby steal assets on the bridged chain and redeem them on the origin chain. This is why a solution based on trustless validity is crucial.

%Using proofs of computational integrity, e.g., Topos's zkSTARK, provides the only viable solution for secure and trustless interoperability. Replacing multi-party protocols with validity proofs totally prevents risks such as deceiving a majority of committee members or validators.\\

Generally speaking, the Topos protocol is built on three major pillars enabled by the components exposed in the previous section.

\begin{enumerate}
    \item The TCE reliable broadcast protocol allows for consistent delivery of causally ordered subnet certificates.
    \item Certificates include a zkSTARK validity proof of the committed state transition; thus every node in the TCE network can attest to the validity of cross-subnet messages without the need to trust the sending subnet.
    \item Certificates are authenticated by means of an ICE-FROST signature; receiving subnets can thereby be ensured that delivered certificates were not tampered with.
\end{enumerate}

\subsubsection{Cross-Subnet Message}

A cross-subnet message represents a request initiated by a user from a subnet to execute a transaction in a remote subnet. It takes the form of a function call of a dedicated protocol-level smart contract, namely the Topos Core contract, on the sending subnet and is to be interpreted on the receiving subnet as another function to call. The Topos Core contract function to call on the sending subnet depends on the type of message requested:

    \begin{itemize}
        \item Asset transfer: An asset is burnt/locked on the sending subnet and equivalently minted on the receiving one.
\begin{verbatim}
    transferAsset(
        subnet_id: Identifier of the receiving subnet,
        asset_id: Identifier of the transferred asset,
        recipient_addr: Recipient's address on the receiving subnet,
        amount: Amount to be transferred
    )
\end{verbatim}
        \item Arbitrary contract call: A contract on the receiving subnet is called from the sending subnet.
\begin{verbatim}
    callArbitraryContract(
        subnet_id: Identifier of the receiving subnet,
        contract_addr: Address of the smart contract,
        func_name: Name of the function to call,
        func_args: Arguments to pass to the function call
    )
\end{verbatim}
    \end{itemize}

Topos enables interoperability of subnets via the following transmission flow of cross-subnet messages (see Figure \ref{fig:comm_flow}). Once a new cross-subnet message emitted by a user is part of the canonical chain of the subnet, it becomes ready for certification as per the rules of the UCI: it is batched with an arbitrary amount of transactions to form a new state transition whose validity is to be proven in a new authenticated certificate. Once created, the message is delivered throughout the TCE network via the reliable broadcast primitive and eventually collected by the subnet it is addressed to. Thanks to the validity and authentication properties guaranteed by the UCI and the consistent delivery ensured by the TCE, the receiving subnet can trustlessly and securely interpret the cross-subnet message and execute the request transaction locally.

\begin{figure}
\centering
\includegraphics[width=1\textwidth]{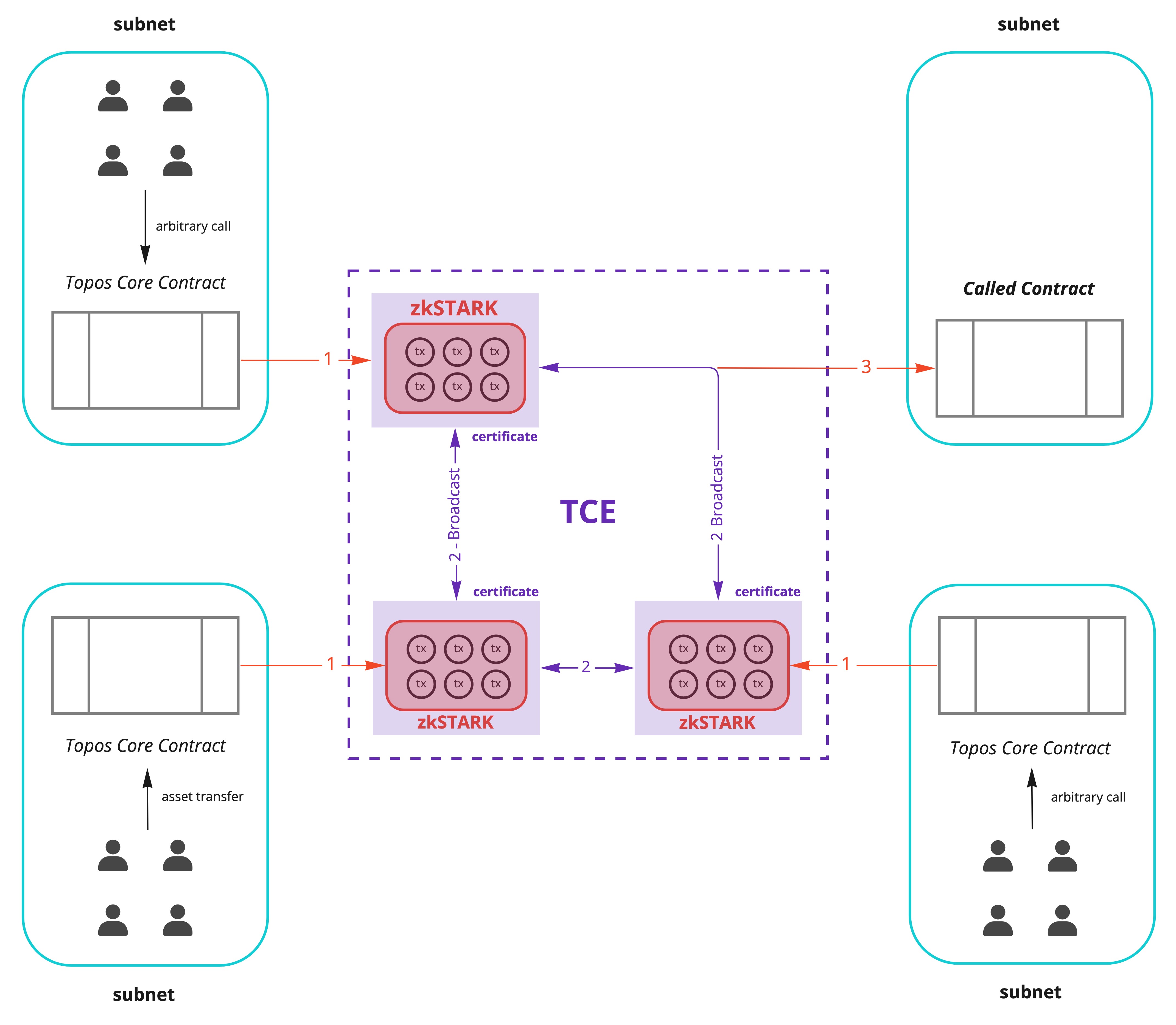}
\caption{\label{fig:comm_flow}Transmission flow of a cross-subnet message}
\end{figure}

\subsubsection{Scalability and Decentralization}

In order to verify the zkSTARK proof included in a certificate, TCE nodes do not need access to the subnet state. The zkSTARK verifier only accesses the hash of the state committed to in the previous certificate of the same subnet, and the hash of the new state committed to in the new certificate. This means that even though the size of the subnets' state transitions can be extremely large, the verification is nearly-optimal and the storage requirement for TCE nodes is kept very low. Keeping the state of the TCE small is paramount to ensure that new joining nodes can synchronize quickly and can keep the burden of storing the certificates low, even with the system processing a large amount of cross-subnet messages. While the size of the TCE state grows linearly with the number of certificates, the overhead of storing new certificates remains acceptable.

Another essential advantage of having stateless verification resides in the fact that since the cost of storing the state of the TCE is kept low, it is possible for actors with low-cost hardware to participate in the TCE, thus increasing the decentralization of the TCE.

\subsubsection{Composability}

As detailed earlier in the paper, composability is a design principle that is found when various applications can compose their value by invoking each others' functions. In the Topos ecosystem, composability is ensured in two different manners.

\paragraph{Atomic Composability}

Within a single subnet network, developers can deploy smart contracts that invoke other contracts synchronously, i.e., comprise contract-to-contract calls that are executed one after the other and only if the previous operation was successfully completed (see Figure \ref{fig:atomic_composability}). If a single operation fails, the whole transaction is reverted. In this context, composability is described as atomic for either all operations or none are executed. Commonly found in traditional databases, atomic composability allows subnets to safely transition their state and prevents them from facing corrupted state introduced by composed contract calls that fail in the middle their execution.

\begin{figure}
\centering
\includegraphics[width=0.23\textwidth]{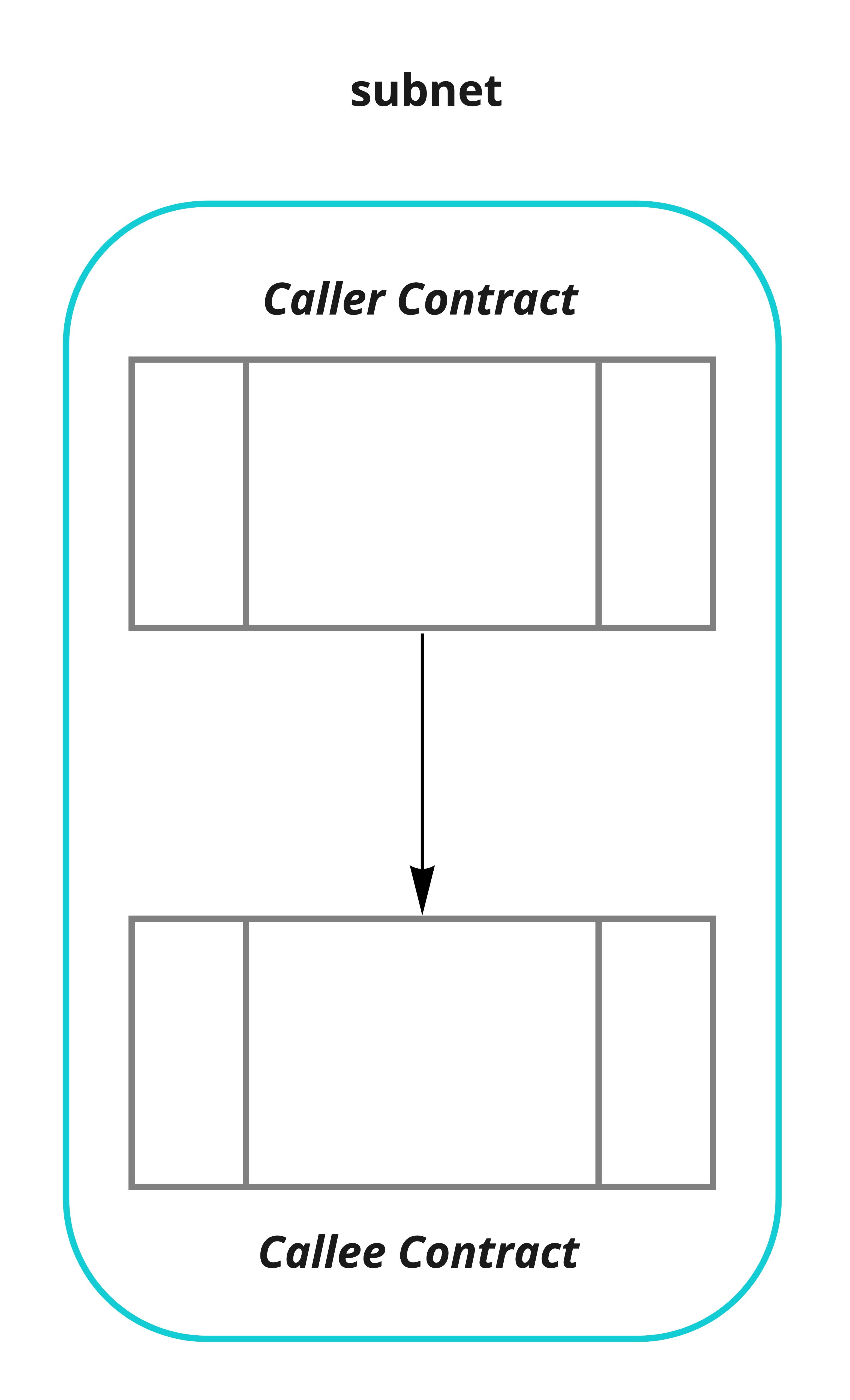}
\caption{\label{fig:atomic_composability}Atomic Composability}
\end{figure}

\paragraph{Asynchronous Composability}

In addition to atomic intra-subnet composability, the Topos protocol permits inter-subnet asynchronous composability, i.e., the capability of different applications deployed on multiple subnets to invoke each other. As we have seen previously, users of a sending subnet can emit cross-subnet asset transfers or remotely invoke arbitrary smart contracts from different subnets by calling functions of the Topos Core contract. To obtain composability across subnets, developers can atomically compose their smart contracts with the Topos Core contract, i.e., programmatically execute cross-subnet asset transfers or remote contract calls as part of their own smart contract functions. Then, subnets can include calls to these composed smart contracts in certificates for receiving subnets to learn about these new types of cross-subnet messages (see Figure \ref{fig:async_composability}).

Asynchronous composability is enabled in the Topos ecosystem by the UCI and the TCE, and is provided to any applications deployed on any subnets.

\begin{figure}
\centering
\includegraphics[width=0.8\textwidth]{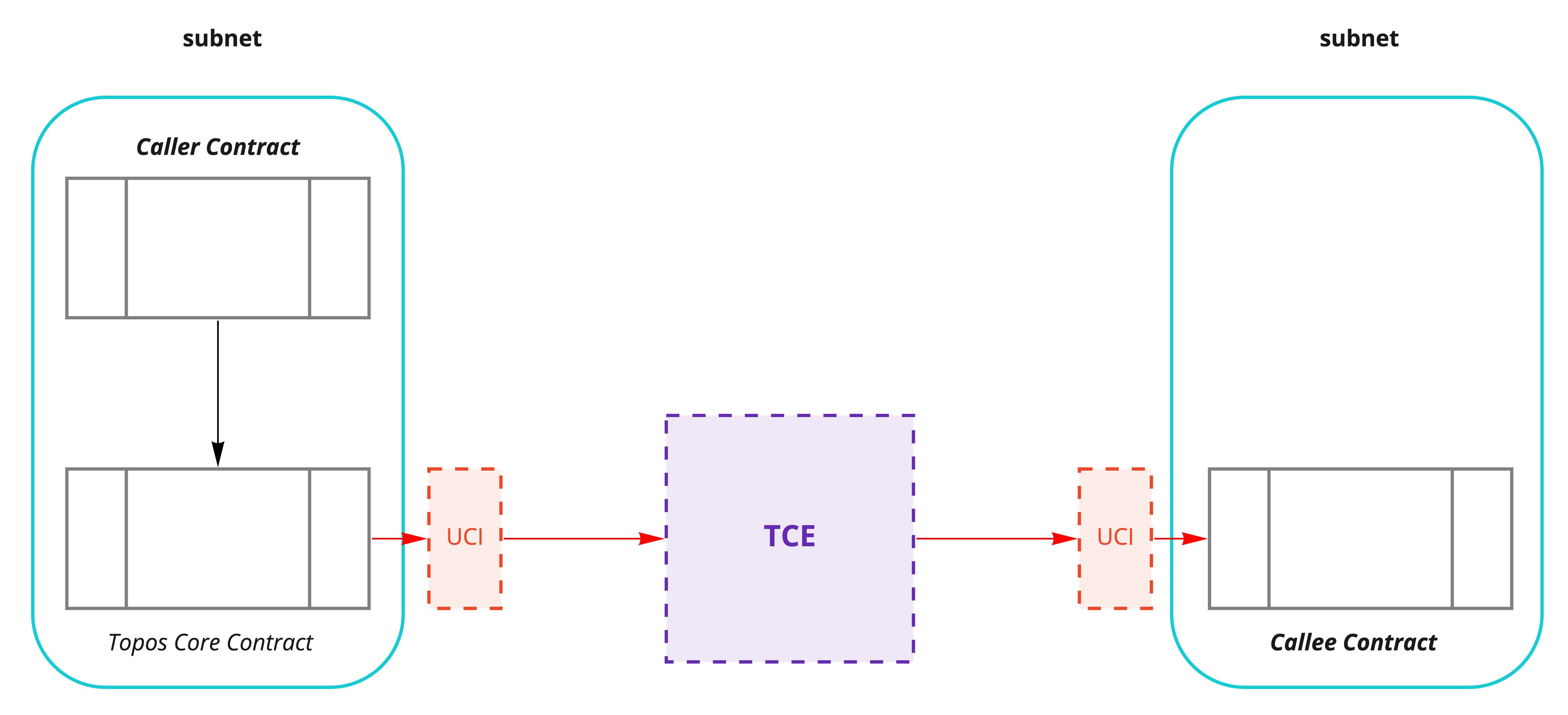}
\caption{\label{fig:async_composability}Asynchronous Composability}
\end{figure}

\subsubsection{Incentive System Design}

Topos will create economic incentives for the various actors of its ecosystem by means of its native token, TOPOS.

In blockchain systems, some actors are selfish and want to take advantage of the system. These actors deviate from the protocol, when such deviation will lead to more individual gain. As an example, they deviate to earn more rewards than when following the prescribed protocol. These actors are called \emph{rationals}.  The Topos ecosystem aims at being tolerant to Byzantine faults when the participants are rational.

In the following sections, we briefly discuss incentive system designs that will be implemented in Topos that will help align the rational actors behaviors to the prescribed behaviors. 

\paragraph{TCE Incentives}

Topos needs to prevent the verifier’s dilemma \cite{10.1145/2810103.2813659} in which, instead of maintaining the common good, correct processes choose not to verify certificates because their verification is more computationally expensive than verifying transactions, or doing anything at all.
Even if requiring the reliable broadcast participant to stake an amount of TOPOS is sufficient to provide guarantees against Sybil attacks, it is not sufficient on its own to incentivize TCE participants to follow the prescribed protocol. In fact, the system can still be subject to many predicaments, such as the verifier's dilemma. Thus, to ensure proper verification and execution of certificates, TCE participants who correctly followed the protocol should be rewarded accordingly. Such rewards come from applying a special fee to cross-subnet messages.
Fees associated to cross-subnet messages are denominated in TOPOS and are collected by TCE participants with respect to their work. Without such economic incentives, undesired situations can happen. For example, no one verifies certificates and as such invalid certificates can be spread in the system, or certificates could stay in $pending$ for a long amount of time, increasing the end-to-end cross-subnet communication latency.
Topos's incentive model will give guarantees that all certificates are processed since they potentially contain a large number of cross-subnet messages. To ensure the performance of the system, a desirable objective of the incentive model is to process the certificates as quickly as possible.
Verifying and executing certificates should be an interesting and lucrative activity incentivizing rational TCE participants to behave well, i.e., the incentive model should not reward participants who do not contribute to the system. This is ensured by the proof-of-activity mechanism introduced below.

\subparagraph{Proof-of-Activity}
Since the communication in the TCE is not synchronous, assessing whether a node does not participate is generally impossible for one cannot distinguish if a node is slow or if it is not working. The alternative approach is to prove that a node was active. Notice that this approach cannot, however, account for slow nodes. 
A proof-of-activity for a node or a set of nodes proves that their work has been seen and was considered by sufficiently many TCE participants. Since messages are signed, the proof-of-activity can consist of the set of messages delivered by a node. However, exchanging such sets of messages will induce a high communication overhead and will require too much storage on the Topos subnet; therefore, to be practical, the proof-of-activity will rely on aggregation techniques. The technical details are deferred to a subsequent paper.

\paragraph{Cross-Subnet Fee}

Cross-subnet fees must also be paid in order for the requested transaction to be executed on the receiving subnet by its validators. To estimate these fees, the transaction originator (a participant of the sending subnet) can ask a service to estimate the fee required to execute the given transaction. The service estimates and returns the result based on the fee calculation of the receiving subnet. One possibility is for this service to be exposed by a system of decentralized oracles like Chainlink \cite{Chainlink}. Internally, the estimation provider will estimate the resources consumed by processing the transaction requested in the cross-subnet message based on the current receiving subnet state.

As a general observation, if the cross-subnet fee was not greater than the one of regular transactions, the reward collected by the validator of the receiving subnet would be smaller (because shared with the TCE participants), hence it would not be in the subnet validators' best interest to process certificates in the first place.

\subsubsection{Uniform Security} \label{sec:security}

As we have seen throughout the paper, the Topos protocol allows for trustless interoperability. A sending subnet is responsible for proving that transactions are valid executions of the Topos zkVM, signing certificates that include cross-subnet messages, and sending the certificates to the TCE for broadcast in the ecosystem, while a receiving subnet is responsible for correctly applying the cross-subnet messages, i.e., submitting the requested transactions in their network.\\

Topos's \textit{uniform security} is realized by the combined properties of the UCI and the TCE:
\begin{itemize}
    \item \textit{Intrinsic validity} of certificates, ensured by the UCI's \texttt{Valid\_cert} predicate;
    \item \textit{Extrinsic validity} of certificate messages, ensured by the WCPRB's agreement property and \texttt{Valid\_deps} predicate implemented by the TCE.
\end{itemize}

\subparagraph{Intrinsic Validity.} The computational integrity of subnets' state transitions is fully decoupled from the consensus security of the related subnets and is entirely ensured by zkSTARK proofs. It is computationally infeasible for an adversary to forge a proof of validity, i.e., convince a verifier that an invalid transaction was correctly executed by the Topos zkVM. In this setup, receiving subnets benefit from an unparalleled level of security for they are assured that certified state transitions are valid. Put another way, if the UCI certificate validation predicate \texttt{Valid\_cert} outputs \textsf{true}, cross-subnet messages are guaranteed to be intrinsically valid.

\subparagraph{Extrinsic Validity.} The certificate messages that are delivered by TCE nodes are guaranteed not to be conflicting with each other while they form weakly causally ordered sets which capture the weak causal precedence between messages. This is achieved if the TCE's \texttt{Valid\_deps} predicate outputs \textsf{true}, which triggers the $wcprb$.Delivery when intrinsic validity is also verified. It follows that it is infeasible for malicious subnets to successfully double-spend and deceive honest receiving subnets into executing conflicting cross-subnet messages.\\

%respect to their dependencies (every dependency is a certificate previously delivered and stored in the history of the sending subnet), and to be unique (they do not conflict with other certificates). The agreement property  As such, certificate messages are attested to be extrinsically valid.\\

As soon as a certificate is delivered, \textit{intrinsic validity} and \textit{extrinsic validity} are enforced by the Topos protocol. This provides the ecosystem with \textit{uniform security}. The safety of executing cross-subnet messages internally on receiving subnets is independent of the security of sending subnets. The trust in cryptographic primitives in lieu of bridge verifiers and/or blockchain consensus as to proving the validity of the message is a fundamental innovation in the field of blockchain interoperability. It is infeasible to create a certificate containing an invalid state transition and to create conflicting certificates in order to double-spend across the ecosystem, even in the presence of a malicious subnet, e.g., more than 2/3 of its validators are controlled by an adversary. Note that transaction censorship remains an issue which needs to be addressed by subnets directly.

\subparagraph{Secure Cross-Subnet Asset Transfer.} In the practical case of a cross-subnet asset transfer, a sending subnet submits a certificate containing a proof of the validity of the asset transfer transaction. This transaction, after checking that the balance of the sender was sufficient to allow the transfer, proceeded with locking/burning the assets to be transferred. Once delivered and verified, the certificate gives total insurance to the receiving subnet that the balance check and the lock/burn operations were conducted with success on the sending subnet. This demonstrates that no malicious entity can simulate that some assets were locked/burnt in the context of a cross-subnet asset transfer, preventing them from arbitrarily minting tokens on the receiving subnet.

An additional innovation brought by the Topos protocol is in that having uniform security allows for the first trustless burn-mint asset transfer model: Receiving subnets do not need to trust sending subnets to have correctly burnt assets before minting. More secure than the lock-mint model, burning assets on sending subnets ensures that no authority can steal user assets sent across multiple chains. By enabling a trustless burn-mint asset transfer model, Topos paves the way for a new kind of asset bridge paradigm where tokens can be natively and frictionlessly deployed and managed on any blockchains.

\subsubsection{Finality and Reorganization}

With certificates being stored on the TCE, it is guaranteed that the submitted commitments to the state of subnets are immutable and hence that subnets' states cannot be reverted to states prior to the ones committed to in their latest delivered certificates without creating conflicts. In the event that a subnet reorgs to such state, the TCE would prevent the delivery of new certificates committing to a new state as these certificates would conflict with the latest delivered certificate.
This is guaranteed by the monotonicity of the TCE message predicate: If for any message $m$, $\texttt{Valid}(m)$ is $\textsf{true}$ at time $t$ then it remains true at any time $t' \ge t$. Once a certificate message is delivered via the WCPRB primitive, the certificate is considered to be final.

\section{Use Cases}\label{sec:use-cases}

The Topos protocol can suit various different use cases. Below are outlined three use cases which hint at the vast capabilities that Topos is capable of.

\subsection{Subnets as Layer-2 to Interoperate Layer-1}

As we have seen previously, in the Topos ecosystem subnets are general-purpose blockchains: They can host any type of application and prove their state transitions using the Topos zkVM, a virtual machine that allows for the execution of arbitrary provable computation. One typical use case that we envision is that of layer-2 (L2) protocols that scale layer-1 (L1) blockchain networks by delegating the execution of transactions to an offchain network and by relying on validity proofs to update the L2 state view on the L1, i.e., zk-rollups \cite{Rollup}. L2 protocols separate the \textit{execution} and \textit{settlement} layers. Traditionally, the two layers are combined: Transactions (execution layer) are executed locally by a participating node when importing a block that has been validated by the network consensus rules (settlement layer). The Topos ecosystem enables the concept of \textit{layered scalability} (see Figure \ref{fig:l1_l2}) where participating subnets are execution layers which delegate state settlement to external L1 networks. One may notice that L1 networks can greatly benefit from the instantiation of new zk-rollup subnets for they can delegate part of their execution layer and hence better scale by settling many more transactions per second. Subnets are L2 zk-rollups scaling existing secure L1s (e.g., Ethereum \cite{Ethereum}, Avalanche \cite{Avalanche}). In this configuration, settlement happens on the L1 networks where subnets publish their proofs of valid state transition.

A key consequence is that Topos indirectly enables interoperability between decoupled L1 blockchains: Any L1 chain that hosts bridge smart contracts bridging assets with a Topos subnet is de facto compatible with all other subnets in the Topos ecosystem, some of which are to be zk-rollups of other L1 networks and as such are capable to route cross-subnet messages back to their assigned L1 chain.

By design, L2 networks offer much cheaper transaction fees than L1 blockchains for they are parallel networks that do not face similar levels of congestion (if they were, the protocol operators could simply spawn a new instance of the protocol and delegate part of their execution workload to it). For that reason, dApps have been making the move to L2 and so have been their L1 users. By moving to Topos subnets, L1 users might enjoy lower fees and additionally gain for free interoperability with all other subnets in the ecosystem and in fine with all the other bridged L1 networks. The Topos protocol offers interoperability with scalability for free for existing L1s.\\

Practically, the data flow is the following (see Figure \ref{fig:comm_flow_l1}):

\begin{enumerate}
  \item To send funds to an L2 subnet, users deposit assets into a bridge contract on the underlying L1 blockchain. An equivalent number of wrapped assets is then minted on the L2 subnet for the user once the lock transaction is processed (classic \textit{lock-mint} model found in most bridge protocols).
  \item Users can initiate (wrapped) asset transfers to any other subnet in the ecosystem. They can also use their wrapped assets to pay fees when submitting requests for general-purpose computation execution, either locally on their subnet, or remotely on any other subnet (remote arbitrary transaction).
  \item A large number of subnet transactions are batched and proven by a zkSTARK proof which is transmitted to the L1 blockchain for L2 state settlement and to the rest of the Topos ecosystem via Topos certificates.
  \item The TCE reliably broadcasts certificates throughout the ecosystem. Once \textit{certificates} are delivered to all TCE nodes, the validators of receiving subnets can process the included requests for remote transaction execution and submit requested transactions locally in their subnet network.
\end{enumerate}

\begin{figure}
\centering
\includegraphics[width=0.8\textwidth]{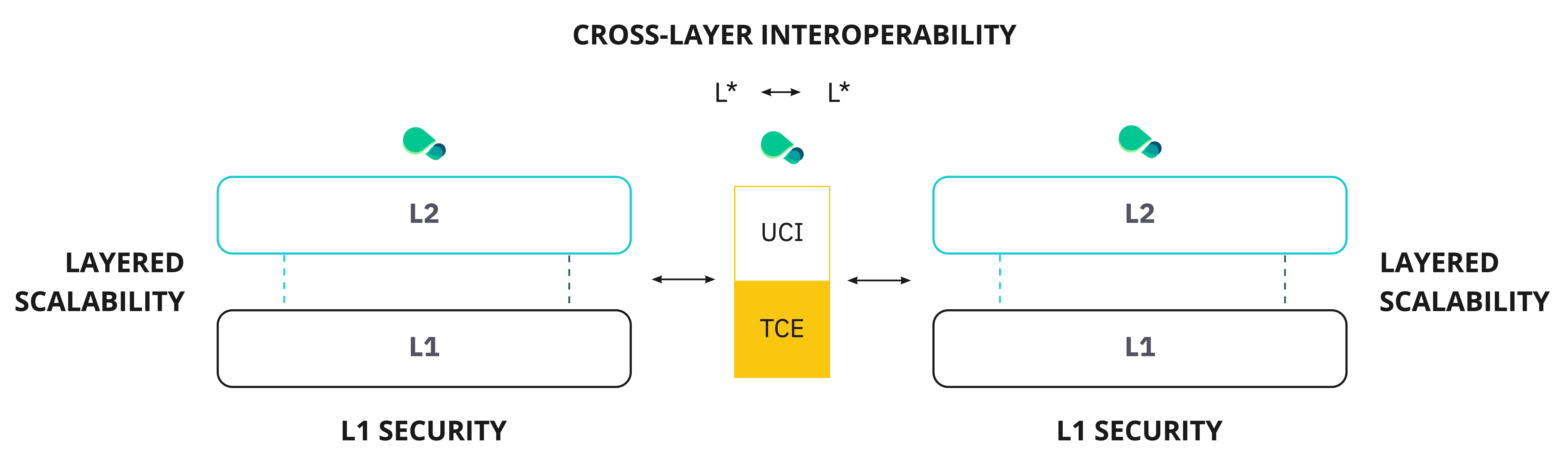}
\caption{\label{fig:l1_l2}Interaction between L1s and L2s}
\end{figure}

\begin{figure}
\centering
\includegraphics[width=1\textwidth]{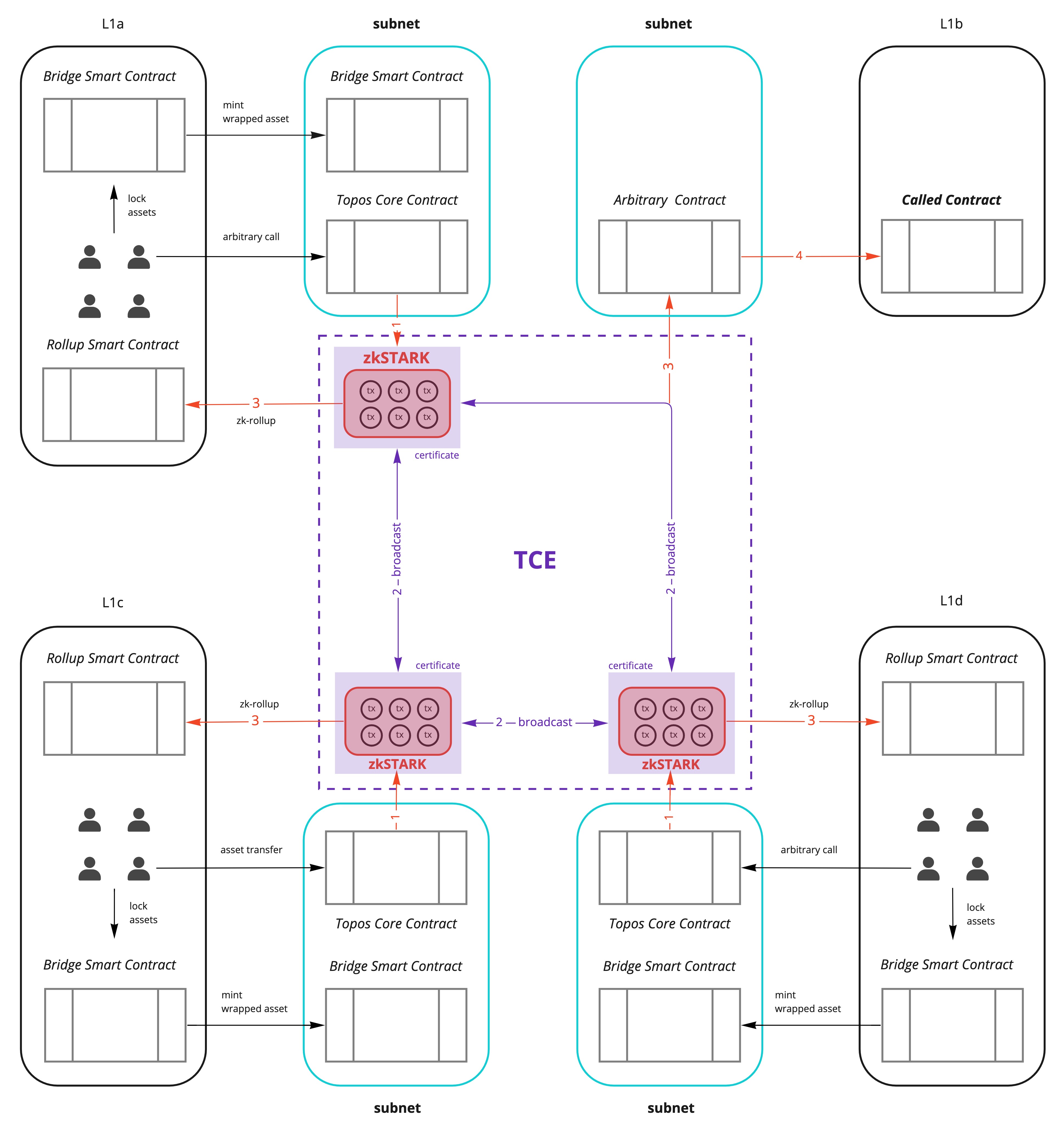}
\caption{\label{fig:comm_flow_l1}Transmission flow of a cross-subnet transaction in a zk-rollup configuration}
\end{figure}

\subsection{Decentralized Finance (DeFi)}

In the past few years, DeFi, one of the predominant use cases of web3 technologies today, has attracted a previously unseen level of user activity and commitment, with global TVL (Total Value Locked) reaching amounts in the hundreds of billions of USD. The problem is that the locked value remains isolated in silos, inaccessible for user-friendly and secure exchange options are still lacking. In most cases, liquidity, staking, loans, and other forms of DeFi value remain locked off in individual protocols, i.e., smart contracts, on individual blockchains. Due to this condition, crypto assets often lay idle in users’ wallets as there is limited scope for their use within individual blockchains. And in turn, DeFi is unable to realize its full potential.

To unlock the true potential of decentralized finance and capture massive value it is essential to allow for seamless composability of platforms and applications across different blockchain ecosystems, so that assets can move frictionlessly across different blockchain networks and be used in any DeFi protocols on any blockchains. Using trustless bridges is the best fit for DeFi because they are non-custodial, and from a security viewpoint they only leverage the security of the interoperated blockchains. Because of Topos' innovations, which give superior security guarantees than other so-called trustless interoperability protocols, value can move freely and securely from one DeFi protocol to another. The properties provided by Topos greatly enhance DeFi's potential to establish itself as a ubiquitous substitution for traditional financial systems.

\subsection{Enterprise Adoption}

%Multiple, enterprise-grade blockchain platforms exist that can handle enterprise transactions. But there aren’t any standards allowing them to communicate with each other. For widespread adoption to occur, there needs to be some kind of proven cross-platform system that allows businesses running on different blockchain platforms to collaborate. But currently no such system exists.

The two major problems preventing enterprise adoption of blockchain technologies are the lack of interoperability and the lack of privacy \cite{ey-public:2019}.

\subparagraph{Lack of interoperability.}

Companies and organizations build value by assembling proprietary data and by controlling which part of it they expose in their products and services. Enterprises often prosper by composing value with other companies and to that end have traditionally, in the Web1-2 era, used ubiquitous infrastructure and technologies such as authenticated RESTful APIs to interface their products and services with their partners' without friction nor extra cost. Unfortunately, Web3 technologies have yet to create such standards. For enterprises to use Web3 and blockchain technologies, there is a need for a frictionless interoperability model in which blockchains running proprietary business logic and storing proprietary data can exchange and capture value openly and without the requirement to expose their private and hidden information.

\subparagraph{Lack of privacy.}

Privacy is another challenge of blockchain technologies. One of the greatest strengths of the technology is the transparency that comes from having a distributed record of transaction history that is public and easy to verify. Yet, it poses a threat to the privacy of organisations and users. Enterprises which want to protect their trade secrets and other sensitive information are therefore reluctant to embrace even the most prominent permissionless blockchain protocols and hence have favored the private/permissioned blockchain architecture.\\

Topos allows an unbounded heterogeneous set of public and private blockchains to interoperate with each other while preserving the privacy of their internal state. As such, Topos solves for both blockchain interoperability and privacy, and hence is the springboard for adoption of the technology by enterprises.

\section{Conclusion}\label{sec:conclusion}

In this paper, we introduced the design of Topos, the first interoperability protocol that is truly trustless and decentralized while preserving the privacy of the participating networks. Topos allows cross-subnet messages to be safely exchanged without the need to rely on third parties to guarantee the validity of the cross-subnet messages. Instead, we showed that this trust can be replaced by proofs of computational integrity, providing cryptographic guarantees. Another benefit of using such proofs is that verification is extremely fast and barely grows with the size of the computation which grants great throughput increase capabilities. Furthermore, a novel threshold signature scheme is introduced to facilitate authentication of messages across the Topos ecosystem. Such primitive makes it practical to manage public keys as, unlike with other threshold signature schemes, the public key remains static for the whole lifespan of subnets. In addition, we introduced a probabilistic reliable broadcast primitive to ensure consistent delivery of cross-subnet messages even in the presence of Byzantine actors. The primitive replaces the classical quorums in favor of stochastic samples while keeping the per-node communication and computation overhead very small even when the network size increases. This leads to a massively scalable and high throughput protocol.

Topos is a secure, trustless, and decentralized interoperability protocol with the aim to realize the vision of an \textit{``Internet of Blockchains"}.

\section{Discussion and Future Work}\label{sec:discussion}

\subsection{Subnet Recovery}

Subnets are sovereign networks which can be exposed to network instability, or worse, to attacks which can threaten their integrity. An example of attack is one leading a supermajority of consensus participants to be controlled by an adversary. In this situation, an arbitrary state can be finalized as part of the canonical chain. Once recovered from the attack, the subnet can start back from a valid pre-attack state only if it has access to this state. Note that on the Topos ecosystem level, the subnet can continue to participate only if there was no certificate emitted during the attack, otherwise even in the case of a recovery, the subnet would never be able to emit a certificate which is valid with respect to the subnet's history stored on the TCE.

One valuable extension of the Topos protocol can be found in leveraging data availability on the TCE to offer a recovery feature for any subnet in the ecosystem. Subnets that were compromised can query the TCE for a proof of availability, retrieve their latest committed state, and reboot from this state. 

TCE nodes store certificates which from subnets' transactional data expose only cross-subnet messages. For a data availability layer on the TCE to allow for subnet recovery, TCE nodes need to store subnets' states in clear (preventing subnets to use Topos's privacy feature). This leads to having additional data be stored, hence impacts the decentralization of the TCE network. A naive solution would be to leverage a distributed storage network, e.g., IPFS \cite{Benet2014IPFSC}, but it is not sufficient to simply store the state on it: There is need for a proof guaranteeing high availability of the storage layer and that the state is actually available. However, the successive states do not need to be available at all time for it is unnecessary to keep past data stored on that layer. Only the commitment to the state needs to be stored permanently on the data availability layer for new participants of the TCE to be able to verify previous zkSTARKs. Thereby, past states can be completely discarded, only the current state needs to be stored on the data availability layer.

Eventually, subnets can choose to include their states in certificates—giving up privacy—or not \cite{Validium} and TCE nodes only store the latest state and remove previous ones as new ones are submitted, keeping storage per subnet constant.

An alternative would be to run a data availability service on a given subnet, e.g., the Topos Subnet, with an economic incentive to store data, such that the service is well decentralized, hence minimizing trust in the data availability layer, in the sense that if the Topos Subnet is available so is the state of other subnets. A drawback of this approach is that past states cannot be discarded since they are stored on-chain.

\subsection{Confidentiality of Cross-Subnet Transactions}

Since subnets in the Topos ecosystem are sovereign blockchain networks, they can choose their level of confidentiality for internal transactions. However, for interoperability transactions between subnets must follow specific rules. In Topos's current design, the protocol specifies proofs of validity of these transactions and authenticity of the certificates they are included in but cross-subnet messages are visible in clear in the certificates. A path for future improvement is to provide confidentiality for cross-subnet messages without losing existing properties. There are a few approaches that require substantial research to achieve this goal. One such approach is to use zkSTARKs for hiding message data.

\subsection{Recursive STARKs}

As detailed in this paper, the Topos protocol introduces the UCI for subnets to speak a common language in order to, in fine, exchange data (via certificates) that is compliant with the protocol. In the current form of the protocol, certificates are created when subnets fill their batch of transactions and create a zkSTARK proving the validity of all transactions of the batch. Consequently, a subnet in which participants exchange a lot of internal transactions but submit very few cross-subnet messages will lead to the creation and propagation in the TCE of certificates that have little value in terms of interoperability in the ecosystem.

In order to relieve subnets and the TCE from this unnecessary workload, a future improvement of the Topos protocol is to leverage recursive zkSTARKs to compose proofs that are pending for insertion in certificates and ultimately create certificates only when subnets decide they have enough value in certifying some state transition (see Figure \ref{fig:stark_recursion}). Subnets will batch and prove sets of transactions as currently designed, but will accumulate the proofs and recursively compose them in a single proof for an upcoming certificate. Eventually, this will introduce flexibility and allow subnets to devise their own lineup of certificates.

\begin{figure}
\centering
\includegraphics[width=0.8\textwidth]{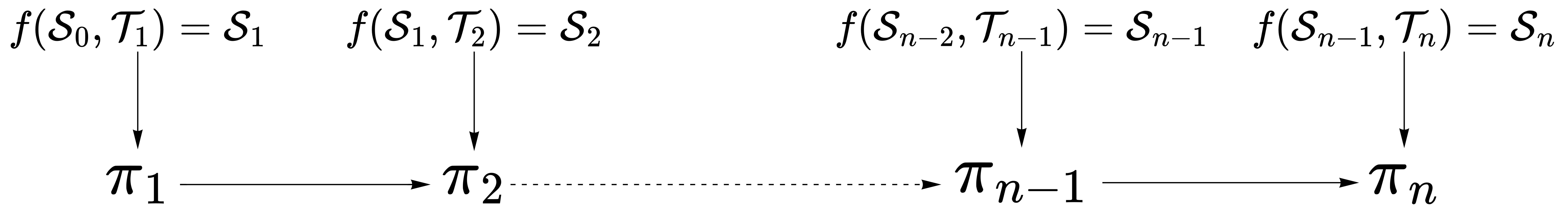}
\caption{\label{fig:stark_recursion}Recursion of STARK proofs}
\end{figure}

\subsection{Alternative to Topos Subnet}
The current Topos solution relies on the Topos Subnet for the following key operations: (i) subnets registration to the system; (ii) management of the participation in the TCE (for Sybil Attack mitigation); (iii) TOPOS asset management. Even if the Topos system security does not fully lean on the Topos Subnet, its failure would incur potential vulnerabilities in the ecosystem. For that reason we aim in the future at implementing a solution that does not depend on any dedicated subnet. This means that the three key operations described above would have to be implemented at the TCE level. Intuitively, all the information totally ordered in the Topos Subnet would have to be replicated at the TCE level, and in absence of consensus this information would appear in different order to different TCE nodes.

However, preliminary research shows that partial ordering of sets of messages is sufficient to implement all three operations. Indeed, all these operations are dependent on a token and we know that causal reliable broadcast (that the TCE implements) is sufficient to implement a cryptocurrency \cite{AT2}, even in the case where the reliable broadcast protocol implements an inflation mechanism \cite{https://doi.org/10.48550/arxiv.2105.04966}.

\section{Acknowledgements}

We would like to thank Sara Tucci-Piergiovanni and Thibault Rieutord at the CEA List and Robin Salen, Travis Baumbaugh, Alonso Gonzalez, and Hamy Ratonanina, our colleagues at Toposware for their helpful feedback and reviews which we have incorporated into the current version of the paper.

\bibliographystyle{alpha}
\bibliography{ref}

\newcommand{\etalchar}[1]{$^{#1}$}
\begin{thebibliography}{GKM{\etalchar{+}}19b}

\bibitem[ABSB18]{AlBassam2018FraudAD}
Mustafa Al-Bassam, Alberto Sonnino, and Vitalik Buterin.
\newblock Fraud and data availability proofs: Maximising light client security
  and scaling blockchains with dishonest majorities.
\newblock {\em arXiv: Cryptography and Security}, 2018.

\bibitem[ACDP{\etalchar{+}}20]{https://doi.org/10.48550/arxiv.2001.11965}
Lăcrămioara Aştefanoaei, Pierre Chambart, Antonella Del~Pozzo, Thibault
  Rieutord, Sara Tucci, and Eugen Zălinescu.
\newblock Tenderbake -- a solution to dynamic repeated consensus for
  blockchains, 2020.

\bibitem[Ava]{AvalancheBridge}
Avalanche bridge.
\newblock \url{https://bridge.avax.network}.

\bibitem[BCC{\etalchar{+}}21]{Chainlink}
Lorenz Breidenbach, Christian Cachin, Alex Coventry, Steve Ellis, Ari Juels,
  Benedict Chan, Farinaz Koushanfar, Daniel Moroz, Florian Tramèr, Andrew
  Miller, Sergey Nazarov, Brendan Magauran, Alexandru Topliceanu, and Fan
  Zhang.
\newblock Chainlink 2.0: Next steps in the evolution of decentralized oracle
  networks.
\newblock \url{https://research.chain.link/whitepaper-v2.pdf}, April 2021.

\bibitem[Ben14]{Benet2014IPFSC}
Juan Benet.
\newblock Ipfs - content addressed, versioned, p2p file system.
\newblock {\em ArXiv}, abs/1407.3561, 2014.

\bibitem[BFLS91]{10.1145/103418.103428}
L\'{a}szl\'{o} Babai, Lance Fortnow, Leonid~A. Levin, and Mario Szegedy.
\newblock Checking computations in polylogarithmic time.
\newblock In {\em Proceedings of the Twenty-Third Annual ACM Symposium on
  Theory of Computing}, STOC '91, page 21–32, New York, NY, USA, 1991.
  Association for Computing Machinery.

\bibitem[BSBHR17]{BenSasson2017FastRI}
Eli Ben-Sasson, Iddo Bentov, Yinon Horesh, and Michael Riabzev.
\newblock Fast reed-solomon interactive oracle proofs of proximity.
\newblock In {\em Electron. Colloquium Comput. Complex.}, 2017.

\bibitem[BSBHR18]{BenSasson2018ScalableTA}
Eli Ben-Sasson, Iddo Bentov, Yinon Horesh, and Michael Riabzev.
\newblock Scalable, transparent, and post-quantum secure computational
  integrity.
\newblock {\em IACR Cryptol. ePrint Arch.}, 2018:46, 2018.

\bibitem[BSCG{\etalchar{+}}]{Zcash}
Eli Ben-Sasson, Alessandro Chiesa, Christina Garman, Matthew Green, Ian Miers,
  Eran Tromer, and Madars Virza.
\newblock Zerocash: Decentralized anonymous payments from bitcoin.

\bibitem[BSCTV14]{184425}
Eli Ben-Sasson, Alessandro Chiesa, Eran Tromer, and Madars Virza.
\newblock Succinct {Non-Interactive} zero knowledge for a von neumann
  architecture.
\newblock In {\em 23rd USENIX Security Symposium (USENIX Security 14)}, pages
  781--796, San Diego, CA, August 2014. USENIX Association.

\bibitem[But]{Ethereum}
Vitalik Buterin.
\newblock Ethereum whitepaper.
\newblock \url{https://ethereum.org/en/whitepaper/}.
\newblock Accessed: 2021-12-21.

\bibitem[But18]{Rollup}
Vitalik Buterin.
\newblock On-chain scaling to potentially ~500 tx/sec through mass tx
  validation. ethereum blog.
\newblock
  \url{https://ethresear.ch/t/on-chain-scaling-to-potentially-500-tx-sec-through-mass-tx-validation/3477},
  September 2018.

\bibitem[CGR11]{10.5555/1972495}
Christian Cachin, Rachid Guerraoui, and Lus Rodrigues.
\newblock {\em Introduction to Reliable and Secure Distributed Programming}.
\newblock Springer Publishing Company, Incorporated, 2nd edition, 2011.

\bibitem[CR93]{10.1145/167088.167105}
Ran Canetti and Tal Rabin.
\newblock Fast asynchronous byzantine agreement with optimal resilience.
\newblock In {\em Proceedings of the Twenty-Fifth Annual ACM Symposium on
  Theory of Computing}, STOC '93, page 42–51, New York, NY, USA, 1993.
  Association for Computing Machinery.

\bibitem[CST]{Cheetah}
Brad Cohn, Evan Shapiro, and Emre Tekişalp.
\newblock Mina: Economics and monetary policy.
\newblock \url{https://github.com/Toposware/cheetah}.

\bibitem[Dou02]{10.1007/3-540-45748-8_24}
John~R. Douceur.
\newblock The sybil attack.
\newblock In Peter Druschel, Frans Kaashoek, and Antony Rowstron, editors, {\em
  Peer-to-Peer Systems}, pages 251--260, Berlin, Heidelberg, 2002. Springer
  Berlin Heidelberg.

\bibitem[Edg]{Edge}
A framework for building ethereum-compatible blockchain networks.
\newblock \url{https://github.com/0xPolygon/polygon-edge}.

\bibitem[FLP85]{10.1145/3149.214121}
Michael~J. Fischer, Nancy~A. Lynch, and Michael~S. Paterson.
\newblock Impossibility of distributed consensus with one faulty process.
\newblock {\em J. ACM}, 32(2):374–382, apr 1985.

\bibitem[GHM{\etalchar{+}}17]{10.1145/3132747.3132757}
Yossi Gilad, Rotem Hemo, Silvio Micali, Georgios Vlachos, and Nickolai
  Zeldovich.
\newblock Algorand: Scaling byzantine agreements for cryptocurrencies.
\newblock In {\em Proceedings of the 26th Symposium on Operating Systems
  Principles}, SOSP '17, page 51–68, New York, NY, USA, 2017. Association for
  Computing Machinery.

\bibitem[GKM{\etalchar{+}}18]{AT2}
Rachid Guerraoui, Petr Kuznetsov, Matteo Monti, Matej Pavlovic, and
  Dragos{-}Adrian Seredinschi.
\newblock {AT2:} asynchronous trustworthy transfers.
\newblock {\em CoRR}, abs/1812.10844, 2018.

\bibitem[GKM{\etalchar{+}}19a]{Guerraoui2019TheCN}
Rachid Guerraoui, Petr Kuznetsov, Matteo Monti, Matej Pavlovic, and
  Dragos-Adrian Seredinschi.
\newblock The consensus number of a cryptocurrency.
\newblock {\em Proceedings of the 2019 ACM Symposium on Principles of
  Distributed Computing}, 2019.

\bibitem[GKM{\etalchar{+}}19b]{Guer}
Rachid Guerraoui, Petr Kuznetsov, Matteo Monti, Matej Pavlovic, and
  Dragos-Adrian Seredinschi.
\newblock Scalable byzantine reliable broadcast (extended version), 08 2019.

\bibitem[Gro16]{Groth16}
Jens Groth.
\newblock On the size of pairing-based non-interactive arguments.
\newblock In Marc Fischlin and Jean-S{\'e}bastien Coron, editors, {\em Advances
  in Cryptology -- EUROCRYPT 2016}, pages 305--326, Berlin, Heidelberg, 2016.
  Springer Berlin Heidelberg.

\bibitem[GRS{\etalchar{+}}21]{cryptoeprint:2021:1658}
Alonso González, Hamy Ratoanina, Robin Salen, Setareh Sharifian, and Vladimir
  Soukharev.
\newblock Identifiable cheating entity flexible round-optimized schnorr
  threshold (ice frost) signature protocol.
\newblock Cryptology ePrint Archive, Report 2021/1658, 2021.

\bibitem[GWC19]{Plonk}
Ariel Gabizon, Zachary~J. Williamson, and Oana Ciobotaru.
\newblock Plonk: Permutations over lagrange-bases for oecumenical
  noninteractive arguments of knowledge.
\newblock Cryptology ePrint Archive, Report 2019/953, 2019.
\newblock \url{https://ia.cr/2019/953}.

\bibitem[HT94]{modular94}
Vassos Hadzilacos and Sam Toueg.
\newblock A modular approach to fault-tolerant broadcasts and related problems.
\newblock Technical report, Cornell University, 1994.

\bibitem[Inc19]{ey-public:2019}
Forrester~Research{,} Inc.
\newblock {Seize The Day: Public Blockchain Is On The Horizon}.
\newblock
  \url{https://assets.ey.com/content/dam/ey-sites/ey-com/en_gl/topics/blockchain/ey-public-blockchain-opportunity-snapshot.pdf},
  2019.

\bibitem[KB]{Cosmos}
Jae Kwon and Ethan Buchman.
\newblock A network of distributed ledgers.
\newblock \url{https://v1.cosmos.network/resources/whitepaper}.
\newblock Accessed: 2021-10-10.

\bibitem[KG21]{frost}
Chelsea Komlo and Ian Goldberg.
\newblock {\em FROST: Flexible Round-Optimized Schnorr Threshold Signatures},
  pages 34--65.
\newblock 07 2021.

\bibitem[KPPT21]{https://doi.org/10.48550/arxiv.2105.04966}
Petr Kuznetsov, Yvonne-Anne Pignolet, Pavel Ponomarev, and Andrei Tonkikh.
\newblock Permissionless and asynchronous asset transfer [technical report],
  2021.

\bibitem[Kwo14]{Kwon2014TendermintC}
Jae Kwon.
\newblock Tendermint : Consensus without mining.
\newblock 2014.

\bibitem[LTKS15]{10.1145/2810103.2813659}
Loi Luu, Jason Teutsch, Raghav Kulkarni, and Prateek Saxena.
\newblock Demystifying incentives in the consensus computer.
\newblock In {\em Proceedings of the 22nd ACM SIGSAC Conference on Computer and
  Communications Security}, CCS '15, page 706–719, New York, NY, USA, 2015.
  Association for Computing Machinery.

\bibitem[Mon]{Monero}
Private, decentralized cryptocurrency that keeps your finances confidential and
  secure.
\newblock \url{https://www.getmonero.org}.

\bibitem[Mon20]{https://doi.org/10.48550/arxiv.2002.03613}
Henrique Moniz.
\newblock The istanbul bft consensus algorithm, 2020.

\bibitem[MXC{\etalchar{+}}16]{10.1145/2976749.2978399}
Andrew Miller, Yu~Xia, Kyle Croman, Elaine Shi, and Dawn Song.
\newblock The honey badger of bft protocols.
\newblock In {\em Proceedings of the 2016 ACM SIGSAC Conference on Computer and
  Communications Security}, CCS '16, page 31–42, New York, NY, USA, 2016.
  Association for Computing Machinery.

\bibitem[Nak08]{Bitcoin}
Satoshi Nakamoto.
\newblock Bitcoin: A peer-to-peer electronic cash system, 2008.

\bibitem[Ora]{Oracle}
What is the blockchain oracle problem?
\newblock \url{https://blog.chain.link/what-is-the-blockchain-oracle-problem/}.
\newblock Accessed: 2021-12-21.

\bibitem[Ped91]{DKG}
T.~P. Pedersen.
\newblock Non-interactive and information-theoretic secure verifiable secret
  sharing.
\newblock {\em In CRYPTO’91}, page 129–140, 1991.

\bibitem[Pol]{PolygonPoSBridge}
Polygon pos bridge.
\newblock \url{https://wallet.polygon.technology}.

\bibitem[RS60]{doi:10.1137/0108018}
I.~S. Reed and G.~Solomon.
\newblock Polynomial codes over certain finite fields.
\newblock {\em Journal of the Society for Industrial and Applied Mathematics},
  8(2):300--304, 1960.

\bibitem[RZ21]{LayerZero}
Caleb~Banister Ryan~Zarick, Bryan~Pellegrino.
\newblock Layerzero: Trustless omnichain interoperability protocol, May 2021.

\bibitem[Sch90]{10.1007/0-387-34805-0_22}
C.~P. Schnorr.
\newblock Efficient identification and signatures for smart cards.
\newblock In Gilles Brassard, editor, {\em Advances in Cryptology --- CRYPTO'
  89 Proceedings}, pages 239--252, New York, NY, 1990. Springer New York.

\bibitem[SLBS20]{Avalanche}
Kevin Sekniqi, Daniel Laine, Stephen Buttolph, and Emin~Gün Sirer.
\newblock Avalanche platform, 2020.

\bibitem[Sol]{Solana}
Solana: A new architecture for a high performance blockchain.
\newblock \url{https://solana.com/solana-whitepaper.pdf}.
\newblock Accessed: 2021-12-21.

\bibitem[SSS22]{cryptoeprint:2022:277}
Robin Salen, Vijaykumar Singh, and Vladimir Soukharev.
\newblock Security analysis of elliptic curves over sextic extension of small
  prime fields.
\newblock Cryptology ePrint Archive, Report 2022/277, 2022.

\bibitem[Sta20]{Validium}
StarkWare.
\newblock Volition and the emerging data availability spectrum.
\newblock
  \url{https://medium.com/starkware/volition-and-the-emerging-data-availability-spectrum-87e8bfa09bb},
  June 2020.
\newblock Accessed: 2021-12-20.

\bibitem[Woo16]{Polkadot}
Gavin Wood.
\newblock Polkadot: Vision for a heterogeneous multi-chain framework, 2016.

\bibitem[Wor]{Wormhole}
One integration to rule them all.
\newblock \url{https://wormholenetwork.com}.

\bibitem[YMR{\etalchar{+}}18]{https://doi.org/10.48550/arxiv.1803.05069}
Maofan Yin, Dahlia Malkhi, Michael~K. Reiter, Guy~Golan Gueta, and Ittai
  Abraham.
\newblock Hotstuff: Bft consensus in the lens of blockchain, 2018.

\end{thebibliography}

\newpage

\appendix
\section{STARK Proof System}

\subsection{Assumptions}

The only assumption required by STARK proof systems is that the hash functions to be used for commitments are collision-resistant. This allows for simpler, leaner, post-quantum and trustless proving systems than other SNARK systems.

\subsection{Prover}

The STARK proof construction can be decomposed into four stages.\\

\textbf{Algebraic Intermediate Representation (AIR).} First there is need for an algebraic representation of the problem. Consider the set $\mathcal{P}$ of the multivariate polynomials  $P_i$  in variables  $\sf{X}$  and  $\sf{Y}$,  $\mathcal{P}=\left\{P_{1}(\sf{X}, \sf{Y}), \ldots, P_{\mathrm{s}}(\sf{X}, \sf{Y})\right\}$ , where  $\sf{X}=(x_1, \cdots x_k)$ and $\sf Y=(y_1, \cdots y_k)$ represent the states of the current and next computation respectively. That is, for two correct vectors $\sf{X}, \sf{Y}$, we have that $\sf{X}, \sf{Y}$ is a correct solution for the system $\mathcal{P}$, i.e. $P_{1}(\sf{X}, \sf{Y})= \ldots = P_{\mathrm{s}}(\sf{X}, \sf{Y}) = 0$. For efficiency of both prover and verifier, we need minimal AIR, which is minimizing:
\begin{enumerate}
    \item $\deg(\mathcal{P})$=  $\max \sf \left(\deg(P_1), \cdots, \deg(P_s) \right)$;
    \item $k$, the state-width;
    \item $|\mathcal{P}|=s$;
    \item $c$, the machine cycle count (that may depend in general on arbitrary input, but is here linear in the size of the input set).
\end{enumerate}

To be able to extend and commit to the trace efficiently, the number of steps is increased so that the number of rows reaches the next power of 2, $T$, for efficient FFTs.

We often talk of execution trace for a program, which can be seen as a $T \times k$ matrix in which:
\begin{itemize}
    \item each row is describing the state of the computation at a given step;
    \item each column tracks a the content of a register over time.
\end{itemize}

\textbf{Extension of the trace and commitment.} We can view any column of the execution trace as a polynomial over a certain domain (generated by $g\in\mathbb{F}$). We can then consider the same polynomial over the domain $S=\langle\omega\rangle$, where $\omega$ is a root of $g$. This is referred to as the \textit{Low Degree Extension} \cite{10.1145/103418.103428}. The evaluation of a column polynomial $f$ on $S$ makes a code word of a Reed-Solomon code \cite{doi:10.1137/0108018} of some rate $\rho$ ($\rho = 1/\beta$ with $\beta$ the ratio between the original trace domain and the augmented LDE domain). That is, $f \in RS[\mathbb{F}, S, \rho]$. To prevent forging of proofs, it is important that the prover cannot change these values later on. Rather than sending all of these points (since we want succinctness and zero knowledge), the prover creates a \textit{commitment} to the values with a Merkle tree structure (the leaves being a grouping of evaluations of all the polynomials at a given LDE point). The commitment, along with the public inputs as part of the AIR program, is used to seed a public coin to allow drawing of random values to make the protocol non-interactive with the Fiat-Shamir heuristic.
    \begin{itemize}[leftmargin=1.5cm]
        \item If RAPs (see below) are being used, auxiliary trace segments can be computed after the previous trace portion has been committed to.
    \end{itemize}
    
\textbf{Constraint composition polynomial and consistency proof.} Similarly, the constraint polynomials $P_i$ may be composed with the column polynomials ($f_i$ for $i=1,\dots,n$) and evaluated at the points of the LDE. However, instead of creating separate evaluations for each constraint, random coins are drawn and used to create a random linear combination of the constraint polynomials. In this combination, the degrees of all constraint polynomials are augmented to all be $D$ (the next power of $2$ following the maximum degree). Due to its higher degree, more coefficients are needed to specify it. These are arranged in several columns ($H_1,\dots,H_D$), which are committed to analogously to the trace polynomials. This commitment is once again used to seed the randomness, and from this a random out-of-domain point $z$ is sampled. The prover then provides the values necessary to evaluate the constraint composition polynomial in two ways: directly through the column values $H_i$ committed to, and indirectly by evaluating the constraint polynomials at the corresponding points of the trace polynomials and performing the same linear combination as before. These values are added to the proof and the randomization of the public coin.\\

\textbf{Fast RS Interactive Oracle Proof of Proximity (FRI).}
\label{stark:fri}
The underlying idea of FRI \cite{BenSasson2017FastRI}, is to apply a similar degree-reduction to what's happening during the Inverse Fast Fourier Transform (splitting a polynomial in two instances over even and odd powers of a variable), and bind prover's responses in these reductions by evaluations of the function $f$ over points of the subset $S$.
More formally, if we can represent $S$ as $\langle\omega\rangle$ (s.t. $\omega$ generates a multiplicative group of order $2^k$) and the function $f^{(0)}: S \rightarrow  \mathbb{F}$ to be the function known by the prover of degree $d \leq \rho|S|$. The verifier will sample a random $x^{(0)}$, and ask the prover to compute $f^{(1)} = f_0 + x^{(0)} f_1$, where $f^{(1)}$ will have degree $\leq \dfrac{\rho \vert S \vert}{2} = \dfrac{\rho \vert \langle \omega^2 \rangle \vert}{2}$ for any $x$ chosen by the verifier. Here, $f_0, f_1$ are two polynomial functions such that $f^{(0)}(x)=f_0(x^2)+xf_1(x^2)$. That is, they are functions $\langle \omega^2 \rangle \rightarrow  \mathbb{F}$ with interpolants $P_{f_0}, P_{f_1}$ which are used to compute the original interpolant of $f^{(0)}$.
If $f^{(0)}$ is $\delta$-far from $\operatorname{RS}[\mathbb{F}, S, \rho]$, then the resulting $f^{(1)}$ will be $\delta'$-far for some $\delta' \leq \delta$. This process is repeated for a number of layers until the polynomial $f^{(\ell)}$ is reached which should be constant (or of low enough degree that it can be checked directly in constant time). 

If the original $f^{(0)}$ was far from any polynomial in $\operatorname{RS}[\mathbb{F}, S, \rho]$, then (with high probability) $f^{(\ell)}$ is not constant. This property of FRI is used to succinctly prove that a certain polynomial is of low degree. That polynomial is known as the \textit{DEEP composition polynomial} (from \textit{Domain Extension for Eliminating Pretenders}). It is a polynomial constructed to be of low degree only if the values previously supplied by the prover (for evaluation of the constraint composition polynomial) are consistent with the polynomials previously committed to. Due to its structure, low-degreeness of the DEEP composition polynomial also implies that the trace polynomials and column polynomials for the constraint composition are of suitably low degree.

\subsection{Verifier}

To verify a proof given by a prover, the verifier must perform the following steps.
\begin{enumerate}
    \item Read the commitment to the execution trace over the LDE domain, updating the public coin and drawing from it random coefficients used by the prover to compute the composition polynomial.
    \begin{itemize}
        \item If RAPs (see below) are being used, intermediate random coins (for use in permutation arguments) are drawn after the previous columns have been committed to.
    \end{itemize}
    
    \item Read the commitment to the constraint composition polynomial evaluations (over the LDE domain), use that to update the public coin, and sample the out-of-domain point $z$.
    
    \item Evaluate the constraints at the provided out-of-domain point $z$ based on prover-supplied trace values. Compute the evaluations of the constraint composition polynomial at the same point from the column values. Check for consistency between values. Reseeding is done after each read.
    
    \item Perform the FRI protocol: Draw coefficients for computing the DEEP composition polynomial and instantiate a FRI verifier for the layer commitments provided in the channel. Draw query positions for the LDE domain, read the evaluations of the trace and constraint polynomials at those positions. Use those to compute evaluations of the DEEP composition polynomial and verify that these are from a low-degree polynomial.
\end{enumerate}

\subsection{Randomized Air with Preprocessing (RAPs)} An additional feature desired for efficiency in STARKS is known as Randomized Air with Preprocessing (RAP). With RAPs, additional columns of the trace are committed to with access to random coins based on the original columns. This allows use of the Schwarz-Zippel lemma to show that, for instance, two columns are permutations of each other. This can be done by checking
\[\prod_{i=1}^n(a_i+\gamma)=\prod_{i=1}^n(b_i+\gamma),\]
where $a_i$ and $b_i$ are the $i$-th entry in each column, and $\gamma$ is randomly chosen after those values have been comitted to. With high probability, this only holds if the two sets $\{a_i\}$ and $\{b_i\}$ are permutations of each other \cite{Plonk}.
By supplying a known permutation $\sigma$ to the verifier, they can run a check that
\[\prod_{i=1}^n(a_i+\alpha\omega^{i}+\gamma)=\prod_{i=1}^n(a_i+\alpha\omega^{\sigma(i)}+\gamma)\]
(where $\alpha$ is randomly chosen along with $\gamma$), which indicates that $a_i=a_{\sigma(i)}$ for all $i=1,\dots,n$. This is useful for enforcing equality over great distances in the execution trace, and is referred to as \textit{copy constraints} with RAPs.

\section{ICE-FROST Signature}

The ICE-FROST protocol \cite{cryptoeprint:2021:1658} is our own adaptation of the FROST protocol \cite{frost}. The goal is to allow a subnet to generate signatures with a t-out-of-n threshold in a decentralized environment without any single trusted or semi-trusted party, and in the potential presence of malicious actors. Compared to the original FROST, our construction makes the key generation robust: enough honest actors can agree on the group's public key even in presence of malicious parties and without any rerun. In addition, honest actors can reliably identify misbehaving participants and exclude them from the scheme. At any point of time, honest actors maintain the same list of honest participants and can ignore any message from other parties.

For completeness, below is the detailed summary of the protocol.

\subsection{Assumptions}
\begin{itemize}
    \item $\mathbb{G}$ is a group of prime order $q$ in which the DDH problem is hard. $g$ is a generator of that group.
    \item The threshold $t$ and the $n$ participants are chosen by the subnet.
    \item Each participant has or receives a \textit{unique id}.
    \item Each participant has access to a $broadcast$ function. Each message published using $broadcast$ is automatically signed and available to everyone.
    \item Each participant $P_i$ is given an index $i$ between 1 and $q-1$. For simplicity we assume that the $n$ participants receive indices $1$ to $n$ but we only need them to be unique and non-zero.
    \item Honest participants want to sign a message $m$ agreed upon externally to the scheme.
\end{itemize}

\subsection{Key Generation phase}

Let $H$ be a hash function whose output is in $\mathbb{Z}_q^*$.
Let $Enc$ and $Dec$ be symmetric encryption and decryption functions.
Let $K$ be a key derivation function compatible with $Enc$ and $Dec$.\\

\textbf{Round 1}
\begin{enumerate}
    \item Every participant $P_i$ samples $t$ random values $(a_{i0},...,a_{i(t-1)}) \overset{\$}\leftarrow \mathbb{Z}_q$, and uses these values as coefficients to define a degree $t-1$ polynomial $f_i(x) = \sum_{j=0}^{t-1}a_{ij}x^j$.
    \item Every $P_i$ computes a proof of knowledge to the corresponding secret $a_{i0}$ by calculating $\sigma_i = (R_i, \mu_i)$, such that $k \overset{\$}\leftarrow \mathbb{Z}_q$, $R_i = g^k$, $c_i = H(i, \Phi, g^{a_{i0}}, R_i)$, $\mu_i = k + a_{i0}.c_i$, with $\Phi$ being a context string to prevent replay attacks.
    \item Every $P_i$ samples $sk_i$ randomly and computes $pk_i = g^{sk_i}$.
    \item Every $P_i$ computes a proof of knowledge to the secret key $sk_i$ by calculating $\tau_i = (S_i, \nu_i)$, such that $k \overset{\$}\leftarrow \mathbb{Z}_q$, $S_i = g^k$, $d_i = H(i, \Phi, pk_i, S_i)$, $\nu_i = k + sk_i.d_i$, with $\Phi$ being a context string to prevent replay attacks.
    \item Every participant $P_i$ computes a public commitment $\overrightarrow{C_i} = \langle \phi_{i0}, ..., \phi_{i(t-1)}\rangle$, where $\phi_{ij} = g^{a_{ij}}$, $0 \leq j \leq t-1$.
    \item Every $P_i$ broadcasts $\overrightarrow{C_i}$, $\sigma_i$, $pk_i$, $\tau_i$.
    \item Upon receiving $\overrightarrow{C_l}$, $\sigma_l$, $pk_l$, $\tau_l$ from participant $1 \leq l \leq n$, $l \neq i$, participant $P_i$ verifies  $\sigma_l = (R_l, \mu_l)$ by checking $R_l \overset{?}= g^{\mu_l}.\phi_{l0}^{-c_l}$, where $c_l = H(l, \Phi, \phi_{l0}, R_l)$, and $\tau_l = (S_l, \nu_l)$ by checking $S_l \overset{?}= g^{\nu_l}.pk_l^{-d_l}$, where $d_l = H(l, \Phi, pk_l, S_l)$. On failure, $P_i$ broadcasts $(malicious, P_l)$ and excludes $P_l$ from its list of participants.
    \item If the number of remaining participants for $P_i$ is below a certain value decided by the subnet, the key generation is aborted. If not, the remaining participants advance to Round 2. For simplicity, we will still refer to the remaining participants as $P_i$ ($i \in \llbracket 1,n \rrbracket$), even though some may have been eliminated at the last step of Round 1.
\end{enumerate}

\textbf{Round 2}
\begin{enumerate}
    \item Each $P_i$ does the following. For each $P_l$, $l \neq i$:
    \begin{itemize}
        \item Compute a Diffie-Hellman key $dhk_{il} = pk_l^{sk_i}$ and a symmetric key $k_{il} = K(dhk_{il})$.
        \item Encrypt $e_{il} = Enc_{k_{il}}(f_i(l))$.
        \item Broadcast $((i,l), e_{il}).$
    \end{itemize}
    \item Upon receiving $((l,i), e_{li})$ from participant $1 \leq l \leq n$, $l \neq i$, participant $P_i$ does the following:
    \begin{itemize}
        \item Compute $dhk_{li} = pk_l^{sk_i}$ and $k_{li} = K(dhk_{li})$.
        \item Decrypt $\delta = Dec_{k_{li}}(e_{li})$.
        \item Verify the share by checking $g^{\delta} \overset{?}= \prod_{k=0}^{t-1} \phi_{lk}^{i^k \text{ mod } q}$. If the share is incorrect, initiate the procedure $complain$.
    \end{itemize}
    \item Participants resolve all complaints with the procedure $exclude$. If the number of remaining participants is below a certain value decided by the subnet, the key generation is aborted. For simplicity, we will still refer to the remaining participant as $P_i$ ($i$ in $\llbracket1,n\rrbracket$), even though some may have been eliminated at the previous step.
    \item Each $P_i$ calculates their long-lived private signing share by computing $s_i = \sum_{l=1}^{n} f_l(i)$, stores $s_i$ securely, and deletes each $f_l(i)$.
    \item Each $P_i$ calculates their public verification share $Y_i = g^{s_i}$, and the group's public key $Y = \prod_{j=1}^{n} \phi_{j0}$. Any participant can compute the verification share of any other participant by calculating $Y_i = \displaystyle{\prod_{j=1}^{n} \prod_{k=0}^{t-1} \phi_{jk}^{i^k \text{ mod } q}}$. Each $P_i$ then broadcast $Y$.
\end{enumerate}

\subparagraph{complain($i,l$)}
\begin{enumerate}
    \item $P_i$ computes a proof that $dhk_{il}$ is well-formed, which is a proof of knowledge of $sk_i$ such that $(pk_i,pk_l,k_{il})$ is of the form $pk_i = g^{sk_i}$, $k_{il} = pk_l^{sk_i}$. To do so it proceeds as follow:
    \begin{itemize}
        \item It computes $A_1 = g^r$, $A_2 = pk_l^r$, where $r \overset{\$}\leftarrow \mathbb{Z}_q$ and $h = H(pk_i, pk_l, k_{il}, A_1, A_2)$.
        \item It computes $z = r + h.sk_i$.
        \item The proof is $\pi = (A_1, A_2, z)$.
    \end{itemize}
    \item $P_i$ broadcasts the message $(complaint, P_i, P_l, k_{il}, \pi)$
\end{enumerate}

\subparagraph{exclude($complaint, P_i, P_l, k_{il}, \pi$)}
\begin{enumerate}
    \item Verify the proof by checking $A_1 . pk_i^h \overset{?}= g^z$, $A_2 . k_{il}^h \overset{?}= pk_l^z$ where $\pi = (A_1, A_2, z)$ and $h = H(pk_i, pk_l, k_{il}, A_1, A_2)$. If the proof is valid, go to step 2. Else, broadcast $(malicious, P_i)$, exclude $P_i$ from the list of participants and terminate the procedure.
    \item If there is an entry $((l,i), e_{li})$ published by $P_j$, go to step 3. Else, broadcast $(malicious, P_i)$, exclude $P_i$ from the list of participants and terminate the procedure.
    \item Compute $\delta = Dec_{K(k_{il})}(e_{li})$. Verify the decrypted share by checking $g^{\delta} \overset{?}= \prod_{k=0}^{t-1} \phi_{lk}^{i^k \text{ mod } q}$. If the share is correct, broadcast $(malicious, P_l)$ and exclude $P_i$ from the list of participants. Else, broadcast $(malicious, P_l)$ and exclude $P_l$.
\end{enumerate}

\subsection{Signing phase}

We assume that a key generation phase has been successfully completed. The $n$ remaining participants now each hold a secret share, and the group's public key is $Y$. Let $H_1$, $H_2$ be hash functions whose outputs are in $\mathbb{Z}_q^*$.\\

\textbf{Round 1}
\begin{enumerate}
    \item The subnet selects randomly $S \subset \llbracket 1,n \rrbracket$, $|S| = s$, $s \ge t$ the index of signing participants. The signing participants are $P_i$, $i \in S$.
    \item Each $P_i$, $i \in S$, samples single-use nonces $(d_i, e_i) \overset{\$}\leftarrow \mathbb{Z}_q^* \times \mathbb{Z}_q^*$.
    \item Each $P_i$ broadcasts $(D_i, E_i)$ where $D_i = g^{d_i}$ and $E_i = g^{e_i}$.
\end{enumerate}

\textbf{Round 2}
\begin{enumerate}
    \item Each $P_i$ constructs $B = \langle (l, D_l, E_l)\rangle _{l \in S}$, computes the binding values $\rho_l = H_1(l,m,B)$, $l \in S$, then derives the group commitment $R = \prod_{l \in S}D_l.(E_l)^{\rho_l}$ and the challenge $c  = H_2(R,Y,m)$.
    \item Each $P_i$ computes their response using their long-lived secret share $s_i$ by computing  $z_i = d_i + (e_i.\rho_i) + \lambda_i.s_i.c$ using $S$ to determine the $i^{th}$ Lagrange coefficient $\lambda_i$.
    \item Each $P_i$ deletes $(d_i, D_i, e_i, E_i)$ from their local storage, and then broadcasts $z_i$.
    \item Each $P_i$ does the following:
    \begin{itemize}
        \item Upon receiving $z_l$ from participant $P_l$, $l \in S$, $l \ne i$, verify the validity of the response by checking $g^{z_l} \overset{?}= R_l.Y_l^{c.\lambda_l}$. On failure, broadcast $(malicious, P_l)$, exclude $P_l$ from the list of participants and go to step 5.
        \item If all responses are correct, compute the group's response $z = \sum z_i$.
        \item Broadcast the signature $\sigma = (R, z)$ along with $m$ and terminate the procedure.
    \end{itemize}
    \item If no signature has been generated and some participants have been excluded, go back to round 1 step 2 with the same $S$ minus the excluded participants. If the resulting set has less than $t$ members, abort the signature generation.
\end{enumerate}

\section{WCPRB Proof of Correctness}\label{sec:proofs}

\begin{theorem}
Algorithm \ref{alg:vrb} solves Weak Causal Probabilistic Reliable Broadcast.
\end{theorem}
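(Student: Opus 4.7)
The plan is to derive each of the six WCPRB properties by combining the corresponding PRB guarantee with two structural facts about Algorithm~\ref{alg:vrb}: (i) the $\texttt{Valid}$ predicate is monotonic, since $\texttt{Valid\_cert}$ is stateless and $\texttt{Valid\_deps}$ only queries the monotonically growing sets $history_p(S_k)$; and (ii) a message is removed from $pending_p$ and $wcprb$.Delivered at most once, and only after $\texttt{Valid}$ turns $\textsf{true}$.

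First I would handle the inherited properties. \emph{No duplication} is immediate: the guard $\{m\}\in pending_p$ together with the removal $pending_p := pending_p \setminus \{m\}$ in the delivery event ensures at most one $wcprb$.Delivery per message at each correct process. \emph{Integrity} follows because $wcprb$.Delivery requires $m \in pending_p$, which is only populated by $prb$.Delivery; by PRB integrity, if the named sender is correct then $m$ was $prb$.Broadcast, which in Algorithm~\ref{alg:vrb} is only reached via an earlier $wcprb$.Broadcast by that sender. For \emph{$\epsilon$-Validity}, a correct broadcaster calls $prb.\text{Broadcast}(m)$ only after $\texttt{Valid}(m)=\textsf{true}$; by PRB $\epsilon$-Validity, $m$ is $prb$.Delivered to the sender with probability at least $1-\epsilon$, placed in $pending$, and by monotonicity $\texttt{Valid}(m)$ still holds at delivery time, so $wcprb$.Delivery is triggered.

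For \emph{$\epsilon$-Consistency} and \emph{$\epsilon$-Totality} I would argue by induction on the depth of the dependency DAG. Suppose a correct process $p$ $wcprb$.Delivers $m$; then every $Cert_i \in m.deps$ lies in $history_p$, hence was itself $wcprb$.Delivered by $p$ at some earlier point, and has strictly smaller depth. The induction hypothesis supplies that each such $Cert_i$ is $wcprb$.Delivered by every other correct process $q$ with probability at least $1-\epsilon$, so $\texttt{Valid\_deps}(m.deps)$ eventually turns $\textsf{true}$ at $q$; combined with PRB $\epsilon$-Consistency/$\epsilon$-Totality giving $q$ the same $m$ in $pending_q$, $q$ eventually $wcprb$.Delivers $m$. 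A union bound over the (finite in expectation) causal past yields the $1-\epsilon$ bound, possibly after adjusting security parameters so the accumulated failure probability remains within the stated bound.

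Finally, for \emph{weak causal order}, suppose the broadcast of $m$ weakly causally precedes the broadcast of $m'$. Unfolding the definition, either the same subnet issued $m$ before $m'$, in which case the linkage of certificates through $\texttt{prev\_state\_hash}$ checked by $\texttt{Valid\_cert}$ forces $m$ to belong to the history of that subnet before $m'$ can be certified; or the chain passes through some $m''$ that was $wcprb$.Delivered by the broadcaster of $m'$ and hence included in $m'.deps$, from which $\texttt{Valid\_deps}$ blocks delivery of $m'$ at any correct process until $m''$ (and transitively $m$) is present in its $history$. In either case, no correct process can pass the $\texttt{Valid}(m')$ check without having previously $wcprb$.Delivered $m$. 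The main obstacle I expect is the inductive consistency/totality argument, because the causal past of a certificate can in principle be unbounded and the $\epsilon$ failure probabilities of the underlying PRB instances compound; the proof will need either a finite-history assumption (certificates have finite $deps$ and finite ancestry) together with a careful choice of PRB security parameters, or an amortized bound expressed as a function of the DAG depth.
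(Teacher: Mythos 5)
Your proposal follows essentially the same route as the paper's proof sketch: each WCPRB property is reduced to the corresponding PRB property, with the monotonicity of the $\texttt{Valid}$ predicate (stateless $\texttt{Valid\_cert}$ plus $\texttt{Valid\_deps}$ over growing $history$ sets) doing the work of turning ``valid once'' into ``valid at delivery time.'' Your treatments of No duplication, Integrity, and $\epsilon$-Validity match the paper's almost line for line. Where you go beyond the sketch is in $\epsilon$-Totality/$\epsilon$-Consistency: the paper simply invokes PRB $\epsilon$-Totality to claim that a lagging process eventually $prb$.Delivers ``the same set of messages'' and hence eventually sees $\texttt{Valid}(m)=\textsf{true}$, whereas you make the induction on the dependency DAG explicit and correctly flag that the per-message $\epsilon$ failure probabilities compound over the causal past, so a union bound or a reparametrization is needed to keep the overall guarantee at $1-\epsilon$. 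That is a real gap in the paper's sketch that your version addresses, at the cost of needing a finite-ancestry assumption (which holds here, since each certificate has finite $deps$ and a finite chain of predecessors). Your weak-causal-order argument is likewise a concrete unfolding (via \texttt{prev\_state\_hash} linkage for same-subnet chains and $deps$ for cross-subnet chains) of what the paper dispatches in one sentence; it is consistent with the intended reading of the definition, though you should note that Algorithm \ref{alg:certsubmission} only records a delivered certificate in $deps_p$ when it is addressed to $p$'s subnet, so the claim that every delivered $m''$ preceding the broadcast of $m'$ appears in $m'.deps$ holds only for the dependencies the protocol is designed to track.
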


\begin{proof}[Proof sketch]
To prove that Algorithm \ref{alg:vrb} solves Weak Causal Probabilistic Reliable Broadcast we prove that all the properties are satisfied. For readability we recall for each property its definition.
\begin{itemize}
    \item \textit{No duplication: No correct process delivers more than one message.} The proof follows from the No duplication property of the PRB and from the fact that when a message is $wcprb$.Delivered, it is removed from the $pending$ set.
    \item \textit{Integrity: If a correct process delivers a message $m$, and the sender $p$ is correct, then $m$ was previously broadcast by $p$.} The proof follows from the No integrity property of the PRB.
    In fact, in Algorithm \ref{alg:vrb}, to be  delivered, a message was necessary $prb$.Delivered by PRB.
    \item \textit{$\epsilon$-Validity: If the sender $p$ is correct, and $p$ broadcasts a message $m$, then $p$ eventually delivers $m$ with probability at least (1 - $\epsilon$).} The proof follows from the $\epsilon$-Validity property of the PRB and considering the following:
    \begin{itemize}
        \item If $p$ broadcasts $m$ then $\texttt{Valid}(m)=\textsf{true}$ at time $t$ and $p$ $prb$.Broadcast$(m)$. Since $\forall t' \geq t, \texttt{Valid}(m)=\textsf{true}$, then by the $\epsilon$-Validity of the PRB, $p$ $prb$.Delivers message $m$, therefore, $m$ is placed in $pending_p$ and being valid it will be removed from $pending_p$ and $wcprb$.Delivered by $p$.    
    \end{itemize}
    \item \textit{$\epsilon$-Consistency: Every correct process that delivers a message delivers
the same message with probability at least $(1 - \epsilon)$}. The proof follows from the $\epsilon$-Consistency property of the PRB.

    \item \textit{$\epsilon$-Totality: If a correct process delivers a message, then every correct process eventually delivers a message with probability at least $(1 - \epsilon)$.}
        \begin{itemize}
        \item If $p_i$ delivers $m$ then $p_i$ $prb$.Delivers $m$ and $\texttt{Valid}(m)=\textsf{true}$. If $p_i$ $prb$.Delivers $m$ then by the $\epsilon$-Totality of the PRB, every other correct process $p_j$ $prb$.Delivers $m$ and place it in $pending_j$ with probability $(1 - \epsilon)$. We now have to prove that the message $m$ in $pending_j$ will be eventually delivered, i.e.,  $\texttt{Valid}(m)=\textsf{true}$ at $p_j$. If $\texttt{Valid}(m)=\textsf{false}$ at $p_j$ at that time, it means that $p_j$ has not yet delivered the messages delivered by $p_i$ with which $\texttt{Valid}(m)=\textsf{true}$. Thanks to the $\epsilon$-Totality of PRB, $p_j$ will eventually $prb$.Deliver the same set of messages that $p_i$ $prb$.Delivered, at that time, $\texttt{Valid}(m)=\textsf{true}$, and then $p_j$ will $wcprb$.Deliver $m$.    
     \end{itemize}
\item \textit{Weak causal order:  If a correct process $p \ wcprb$.Delivers a message $m$ then $m$ weakly casually precedes all the previously $wcprb$.Delivered messages.}
 %\item\textit{ External Validity: If a correct process $p_i$ delivers a message $m$ at time $t$, then $\texttt{Valid}(m)$ holds at $p_i$ at time $t$.} 
 The proof simply follows from the definition of the $\texttt{Valid}$ predicate and the check on $\texttt{Valid}(m)$ before triggering $wcprb$.Deliver($m$).
\end{itemize}
\end{proof}

\end{document}